\newtheorem{theorem}{Theorem}[section]
\newtheorem*{remark}{Remark}
\newtheorem{proposition}[theorem]{Proposition}
\newtheorem{lemma}[theorem]{Lemma}
\newtheorem{claim}[theorem]{Claim}
\newtheorem{corollary}[theorem]{Corollary}
\newtheorem{definition}[theorem]{Definition}
\newtheorem{problem}[theorem]{Problem}
\newcommand{\cP}{\mathcal{P}}
\newcommand{\F}{\mathcal{F}}
\newcommand{\maxCSP}[1][3]{\textrm{max-CSP}}
\newcommand{\gapCSPF}{\textrm{Gap-CSP-}\F}
\newcommand{\maxCSPF}[1][3]{\textrm{Max-CSP-}\F}
\newcommand{\CSP}[1][3]{\textrm{CSP}}
\newcommand{\kCSP}[1][3]{#1\textrm{-CSP}}
\newcommand{\kSAT}[1][3]{#1\textrm{-SAT}}
\newcommand{\kCNF}[1][3]{#1\textrm{-CNF}}
\newcommand{\GapColoring}{\textrm{Gap-Coloring}}
\renewcommand{\SS}[1][3]{\textrm{Subset-Sum}}
\newcommand{\GapVC}{\textrm{Gap-Vertex-Cover}}
\newcommand{\HamCycle}{\textrm{HamCycle}}
\newcommand{\NP}{\ensuremath{\mathcal{NP}}}
\newcommand{\RP}{\ensuremath{\mathcal{RP}}}
\newcommand{\coRP}{\ensuremath{\rm co}\mathcal{RP}}
\newcommand{\BPP}{\ensuremath{\mathcal{BPP}}}
\newcommand{\N}{{\mathbb N}}
\newcommand{\E}{{\mathop{\mathbf{E}}}}
\newcommand{\eps}{\varepsilon}
\newcommand{\seq}{\subseteq}
\newcommand{\val}{val}
\begin{document}

\title{On Percolation and $\NP$-Hardness}

\author{Daniel Reichman\thanks{
Department of Computer Science, Cornell University, Ithaca, NY, USA.  Email: \texttt{daniel.reichman@gmail.com}. Supported in part by NSF grants IIS-0911036 and CCF-1214844, AFOSR grant FA9550-08-1-0266, and ARO grant W911NF-14-1-0017}
\and
Igor Shinkar\thanks{
    Courant Institute of Mathematical Sciences,
    New York University.
    Research supported by NSF grants CCF 1422159, 1061938, 0832795  and Simons Collaboration on Algorithms and Geometry grant.
    } }

\maketitle
\begin{abstract}
We consider the robustness of computational hardness of problems
whose input is obtained by applying independent random deletions to worst-case instances.
For some classical $\NP$-hard problems on graphs, such as Coloring, Vertex-Cover, and Hamiltonicity, we examine the complexity of these problems when edges (or vertices) of an arbitrary
graph are deleted independently with probability $1-p > 0$. We prove that for $n$-vertex graphs, these problems remain
as hard as in the worst-case, as long as $p > \frac{1}{n^{1-\epsilon}}$ for arbitrary $\epsilon \in (0,1)$, unless $\NP \subseteq \BPP$.

We also prove hardness results for Constraint Satisfaction Problems,
where random deletions are applied to clauses or variables,
as well as the $\SS$ problem, where items of a given instance are deleted at random.
\end{abstract}

\newpage
\section{Introduction}

The theory of $\NP$-hardness suggests that we are unlikely to find optimal
or near optimal solutions to $\NP$-hard problems in polynomial time.
This theory applies to worst-case settings where one considers the worst
running-time over the worst possible input. It is less clear whether these
hardness results apply to ``real-life" instances. One way to address this
question is to examine to what extent known $\NP$-hardness results are
stable under random perturbations, as it seems reasonable to assume that
a given instance of a problem may be subjected to noise originating from
multiple sources.

In this work we study \emph{worst-case} instances that are subjected to
\emph{random perturbations} of a specific type, namely, random deletions.
We focus on the following deletion process, known as \emph{edge percolation}:
given a graph $G$ consider a random subgraph of $G$ obtained
by deleting each edge of $G$ \emph{independently} with probability $1-p$
(where $p \in (0,1)$ may depend on the size of the instance).
This model generalizes familiar random graph models such as the
Erd\"{o}s-R\'{e}nyi random graph $G(n,p)$. Instead of focusing on deleting
edges at random from the complete graph, our starting graph $G$ may be chosen
arbitrarily out of all $n$-vertex graphs. Then the edges of $G$ are deleted
independently with probability $1-p$. The case of \emph{vertex} percolation, where the
vertices of a given graph are deleted independently at random is examined as well.
We also study random deletions in other $\NP$-complete problems, such as $\kSAT[3]$ and $\SS$.

Throughout we refer to instances that are subjected to
random deletions as \emph{percolated instances}.
Our main question is whether such percolated instances
remain hard to solve by polynomial-time algorithms, under reasonable assumptions from complexity theory.

\paragraph{A first example.}
Consider the 3-Coloring Problem, where given a graph $G$
we need to decide whether $G$ is $3$-colorable.
Suppose we sample a random subgraph $G'$ of $G$, by deleting each
edge of $G$ independently with probability $1/2$,
and ask whether the resulting graph is 3-colorable (one can prove similar results when edges are deleted with probability smaller than $1/2$, but we focus on the
case where $p=1/2$ for concreteness).
Is there a polynomial time algorithm can decide with high probability whether $G'$ is $3$-colorable?
Or does the problem remain hard in the sense that an efficient algorithm that determines whether $G'$ is $3$-colorable
would imply that every problem in $\NP$ admits an efficient algorithm?

We demonstrate that a polynomial-time algorithm that decides whether $G'$
is $3$-colorable is unlikely. We show it by considering the following polynomial
time reduction from the 3-Coloring Problem to itself.
Given an $n$-vertex graph $H$ the reduction outputs a graph $G$ that is
an $R$-blow-up of $H$ for $R=C\sqrt{\log(n)}$ where $C>0$ is large enough.
That is, each vertex of $H$ is replaced by a cloud of $R$ vertices that form
an independent set in $G$, and each edge of $H$ is replaces with a complete
$R \times R$ bipartite graph in $G$ between the corresponding clouds in $G$.
It is clear that $H$ is 3-colorable if and only if $G$ is 3-colorable.

In fact, the foregoing reduction satisfies a stronger robustness property
for the random subgraph $G'$ of $G$. Namely, if $H$ is 3-colorable,
then $G$ is 3-colorable, and hence $G'$ is also 3-colorable with probability 1.
On the other hand, if $H$ is not 3-colorable, then $G$ is not 3-colorable, and with high
probability $G'$ is not 3-colorable either.
Indeed, for any edge $(v_1,v_2)$ in $H$ let $U_1,U_2$ be two clouds in $G$
corresponding to $v_1$ and $v_2$. Fixing two arbitrary sets $U_1' \seq U_1$
and $U_2' \seq U_2$ each of size at least $R/3$, the probability there is no
edge connecting a  vertex from $U_1$ to a vertex in $U_2$ is at most
$2^{-R^2/9} = 2^{-C' \log n}$. By union bound we get that
with high probability (over the sampling of $G'$) for any two clouds $U_1,U_2$
corresponding to an edge in $H$ and any
$U_1' \seq U_1$ and $U_2' \seq U_2$ each of size at least $R/3$
there is at least one edge between $U'_1$ and $U'_2$.
Therefore, with high probability any 3-coloring
of $G'$ can be decoded to a 3-coloring of $H$ by coloring each vertex
$v$ of $H$ with the color that appears the largest number of times in the coloring of
the corresponding cloud in $G'$ (breaking ties arbitrarily).
This suggests that unless $\NP \seq \coRP$ there is no
polynomial time algorithm that given a 3-colorable graph $G$ finds a legal
3-coloring of a random subgraph of $G$ obtained by subsampling every edge with probability $1/2$.

\subsection{Motivation}

There is a large body of research dealing with computational problems on
random graphs and formulas \cite{FriezeM97}. This study has resulted with
several algorithms which have proven effective on random instances.
A more recent line of research suggests that efficient algorithms for finding exact or approximate
solutions to computational problems on random objects may not exist~\cite{achi,Coja-OghlanE11,Rossman14}. Other works have demonstrated that assuming problems
on randomly generated instances to be hard, implies hardness of approximation
results for certain optimization problems that are not known to follow from
worst-case assumptions~\cite{Feige02}. These results raise the question of
what kind of hardness results for solving optimization problems exactly or
approximately for percolated instances can be derived when the original
instance is selected in a worst-case fashion.
We note that proving hardness results for our model
should be an easier task than proving hardness results for random instances
such as those arising, for example, from the Erd\"{o}s-R\'{e}nyi random graph
$G(n,p)$, as we have more freedom in choosing the instance that is subjected
to random deletions.

The study of random discrete structures has resulted with a wide range of mathematical tools which have proven instrumental in proving rigorous results regarding such structures \cite{Random_Graphs,FriezeM97,grimmett,mezard}. Our hybrid model may offer the opportunity to apply these methods to a broader range of distributions of instances of $\NP$-hard problems.

\subsection{Our results}

We consider several classical $\NP$-hard problems, for which we prove that
they remain hard also on percolated instance. Unless stated otherwise, $n$ stands for the number of vertices in the graph.

\begin{itemize}
\item
For the Maximum Independent Set problem we use the hardness of approximation result
of~\cite{FeigeKilian} to show that for edge percolation, where we keep each edge of a
given graph with probability $p > \frac{1}{n^{1-\eps}}$ for some $\eps  \in (0,1)$
it is hard to approximate the maximal independent set on percolated instances
within any factor better than $\Omega(\frac{1}{pn^{1-\epsilon}})$
We also show that the chromatic number of a percolated instance in
hard to approximate within $O(pn^{1-\eps})$.
Note that for $p > \frac{1}{n^{1-\eps}}$
(in fact, for $p > \frac{C \log(n)}{n}$)
such random percolated graphs have maximal degree at most $O(pn)$
with high probability,
and hence can be colored efficiently using $O(pn)$ colors.

We also prove that for vertex deletion these problems remain as hard to approximate as in
the worst-case, as long as the vertices remain in the graph independently
with probability $p > \frac{1}{n^{1-\eps}}$ for some $\eps \in (0,1)$.
More specifically, denoting by $m$ the number of remaining vertices
in the vertex percolated subgraph, it is hard to approximate its
chromatic number or independence number
within a factor of $m^{1-\delta}$ (resp. $\frac{1}{m^{1-\delta}}$) for arbitrary constant $\delta \in (0,1)$.

\item For the Vertex-Cover problem, we prove that for any constant $\delta > 0$ an algorithm
that gives $2-\delta$ approximation for percolated instances implies also
a $2-2\delta$ approximation algorithm for worst-case instances.
Our results hold for both edge and vertex percolation, where the
edges or the vertices of a given graph remain with probability
$p>\frac{1}{n^{1-\eps}}$ for some $\eps \in (0,1)$.
In particular, assuming the Unique Games Conjecture, the results of ~\cite{Khot} imply 
there is no randomized polynomial time algorithm that with high probability
gives $2-\delta$ approximation for the Vertex Cover problem on percolated instances.

\item For the Hamiltonicity problem, we prove hardness results for percolated
instances with respect to edge percolation on directed graphs.
We show that the problem where one needs to determine whether a graph contains
a Hamiltonian cycle is also hard for percolated graphs,
where each edge of a given graph is kept in the graph with probability
$p > \frac{1}{n^{1-\eps}})$ for any $\eps \in (0,1)$.

\item We also consider percolation of $\kSAT[3]$ instances where clauses are deleted at random with probability $p$.
We prove that, unless $\NP \seq \coRP$ for every $\eps,\delta \in (0,1)$ if
clauses of a given $\kSAT[3]$ survive with probability $p > \frac{1}{n^{2-\delta}}$,
then $(7/8+\eps)$-approximation on percolated instances
is hard, as it is the case for worst-case instances.
This result is nearly tight, as known algorithms for random $\kSAT[3]$ formulas imply that
for sufficiently small $c > 0$ if clauses survive deletions with probability
$p > \frac{c}{n^2}$, the resulting formula admits a satisfying assignment
which can be found efficiently with high probability (see the related works section for more details).

More generally, we prove that
unless $\NP \subseteq \BPP$ arbitrary $k$-ary Boolean $\CSP$ problems
are as hard to approximate on percolated instances as in the worst-case,
as long as each clause is percolated with probability $p > \frac{1}{n^{k-1-\eps}}$
for any $\eps \in (0,1)$.

The key step in the proof is establishing that any hardness of approximation of
a $k$-ary $\CSP$ problems can be translated to the same hardness approximation
on instances whose number of constraints
is $n^{k-\eta}$ for arbitrary small $\eta>0$. For example, relying on the result of
H{\aa}stad~\cite{Hastad01} we show that $\kSAT[3]$ is $\NP$-hard to approximate with
a ratio better than $7/8+\epsilon$ even on instances that
contain at least $n^{3-\eta}$ clauses.

We also consider variable percolation, where each variable is deleted with probability
$p>\frac{1}{n^{1-\eps}}$ for any $\eps \in (0,1)$
(when a variable is removed all clauses containing it are removed as well).
Similar ideas as those applied for the clause percolation case
imply that such percolated instance are essentially as hard as in the worst case.

\item We study percolation on instances of the $\SS$ problem,
where each item of the set is deleted with probability $1-p$.
We show that the problem remains hard as long as $p=\Omega (\frac{1}{n^{1/2-\eps}})$ for some $\eps \in (0,1/2)$,
where $n$ is the number of items in the given instance.

\end{itemize}

\subsection{Our techniques}
In proving hardness results for percolated instances we use the concept of
\emph{robust reductions} which we explain next.
It will be convenient to consider promise problems.
Recall, that a promise problem is a generalization of a decision problem,
where for the problem $L$ there are two disjoint subsets $L_{YES}$ and $L_{NO}$,
such that an algorithm that solves $L$ must accept all the inputs in $L_{YES}$
and reject all inputs in $L_{NO}$.
If the input does not belong to $L_{YES} \cup L_{NO}$, there is no requirement on the output
of the algorithm.

\begin{definition}
	For each $y \in \{0,1\}^*$ let $Perc(y)$ be a distribution on $\{0,1\}^*$,
	that is samplable in time that is polynomial in $|y|$.

	For two promise problems $A =(A_{YES},A_{NO})$ and $B = (B_{YES},B_{NO})$
    a polynomial time reduction $r$ from $A$ to $B$ is said to be
    \emph{$Perc$-robust} if
	\begin{enumerate}
		\item For all $x \in A_{YES}$ it holds that $r(x) \in B_{YES}$, and $\Pr[ Perc(r(x)) \in B_{YES} ] > 1 - o(1)$.
		\item For all $x \in A_{NO}$ it holds that $r(x) \in B_{NO}$, and $\Pr[ Perc(r(x)) \in B_{NO} ] > 1 - o(1)$.
	\end{enumerate}

	If in the first item we have $\Pr[Perc(r(x)) \in B_{YES}] = 1$,
	then we say that $r$ is a $Perc$-robust $\coRP$-reduction.
	Similarly, if in the second item we have $\Pr[Perc(r(x)) \in B_{NO}] = 1$,
    then we say that $r$ is a $Perc$-robust $\RP$-reduction.
\end{definition}

Let us elaborate on how robust reductions apply when graph percolation is concerned.
Let $A,B$ be $\NP$ languages over graphs, and given a graph $y$,
let $Perc(y)$ be vertex or edge percolation of $y$.
In such setting a reduction $r$ is said to be $Perc$-robust
if it satisfies the standard definition of a reduction,
i.e., $x \in A$ if and only if $r(x) \in B$, and \emph{in addition}
the containment of $r(x)$ in $B_{YES}$ (or in the $B_{NO}$) is robust
to random deletions that are captured by the distribution $Perc(r(x))$.
As a concrete example, consider the language $\kSAT[3]$ consisting of all satisfiable $\kCNF[3]$ formulas,
and the language $\HamCycle$ consisting of graphs containing a Hamiltonian cycle.
Consider a reduction from $\kSAT[3]$ to $\HamCycle$ that given a $\kCNF[3]$ formula $\phi$ produces a graph $G$.
Let $G_p$ be a random subgraph of $G$ obtained from $G$ by including each edge of $G$ independently
with probability $p$. The reduction is said to be robust with respect to edge percolation
if the following two assertions hold:
(1) if $\phi$ is satisfiable, then $G$ contains a Hamiltonian cycle and
$G_p$ contains a Hamiltonian cycle with high probability,
and (2) if $\phi$ is not satisfiable, then $G$ is not Hamiltonian,
and with high probability $G_p$ is not Hamiltonian either.

We make two remarks regarding the example above.
First note that if $G=(V,E)$ does not contain a Hamiltonian cycle,
then neither does any graph $G'=(V,E')$ where $E' \subseteq E$. Therefore, if such robust
reduction exists, then it is necessarily a robust $\RP$-reduction.

Note also that such reduction must be such that
if $\phi$ is satisfiable, then $G$ contains a Hamiltonian cycle,
and furthermore $G$ must contain many Hamiltonian cycles,
even if $\phi$ has only a small (e.g., constant number of satisfying assignments.
Indeed, if $G$ contained only $K$ Hamiltonian cycles for some constant $K$,
then, $G_p$ is unlikely to be Hamiltonian, as typically only $pn$ edges ($n$ is the number of vertices of $G$)
of each cycle will remain after percolating the edges.
That is, such a reduction \emph{cannot} be a parsimonious reduction
in the sense that the reduction preserves the number of $\NP$-witnesses.

\medskip

The existence of such a reduction implies that the Hamiltonicity problem is
in some sense $\NP$-hard on percolated instances.
Below we explain this hardness more precisely.
We start with the following definition.

\begin{definition}
	Let $L =(L_{YES},L_{NO})$ be a promise problem,
	and for each $y$ instance of $L$, let $Perc(y)$ be a distribution
	on instances of $L$	that is samplable in time that is polynomial in $|y|$.

	The problem $L =(L_{YES},L_{NO})$ is said to be \emph{$\NP$-hard
    under a $Perc$-robust reduction} if there exists a \emph{$Perc$-robust}
    reduction from an $\NP$-hard problem to $L$.
\end{definition}

We use the term $Perc$-robust to avoid confusion with other notions of robust reductions that have appeared in the literature. In order to ease readability, we will often write
robust reductions instead, always refereing to $perc$-robust reductions as defined above.

\begin{proposition}\label{prop:BPP RP harndess}
	Let $L =(L_{YES},L_{NO})$ be a promise problem,
	and for each $y$ instance of $L$, let $Perc(y)$ be a distribution
	on instances of $L$	that is samplable in time that is polynomial in $|y|$.

	If $L$ is $\NP$-hard under a $Perc$-robust reduction,
	then there is no polynomial time algorithm that when given an input $y$
	decides with high probability whether $Perc(y) \in B$, unless $\NP \seq \BPP$.

	If the foregoing hardness holds under a $Perc$-robust $\RP$-reduction ($\coRP$-reduction),
    then the same conclusion holds, unless $\NP \seq \RP$ (resp. $\NP \seq \coRP$).
\end{proposition}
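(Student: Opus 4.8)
The plan is to prove the contrapositive. Suppose there is a polynomial-time algorithm $\mathcal{A}$ that, on input an instance $y$ of $L$, decides correctly with probability $1-o(1)$ whether the percolated instance $z\sim Perc(y)$ belongs to $L_{YES}$ or to $L_{NO}$ (the probability being over the draw of $z$ and the internal coins of $\mathcal{A}$, with no requirement imposed when $z\notin L_{YES}\cup L_{NO}$). Let $A=(A_{YES},A_{NO})$ be an $\NP$-hard problem and $r$ a $Perc$-robust reduction from $A$ to $L$, both guaranteed by the hypothesis. I would define a randomized polynomial-time algorithm $\mathcal{B}$ for $A$ as follows: on input $x$, compute $y:=r(x)$, then draw a single sample $z\sim Perc(y)$ --- possible in time $\poly(|y|)=\poly(|x|)$ since $Perc$ is polynomial-time samplable --- run $\mathcal{A}$ on $z$, and output its answer.

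The correctness of $\mathcal{B}$ follows from a union bound over two events of probability $o(1)$. If $x\in A_{YES}$, then by the first clause in the definition of a $Perc$-robust reduction $r(x)\in L_{YES}$ and $\Pr_z[z\in L_{YES}]\ge 1-o(1)$; conditioned on $z\in L_{YES}$, algorithm $\mathcal{A}$ outputs $1$ except with probability $o(1)$, so $\Pr[\mathcal{B}(x)=1]\ge 1-o(1)$. Symmetrically, if $x\in A_{NO}$ the second clause gives $\Pr_z[z\in L_{NO}]\ge 1-o(1)$ and, on such $z$, $\mathcal{A}$ outputs $0$ except with probability $o(1)$, so $\Pr[\mathcal{B}(x)=1]\le o(1)$. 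Hence $\mathcal{B}$ decides $A$ with two-sided error $o(1)$, which is below $1/3$ for all sufficiently long inputs (the finitely many short inputs being handled by a table), and a constant number of independent repetitions with a majority vote makes this explicit. Thus $A\in\BPP$, and since $A$ is $\NP$-hard, $\NP\seq\BPP$.

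For the $\RP$ and $\coRP$ refinements I would rerun the same composition and track which side of $\mathcal{B}$'s error is already killed by the reduction. If $r$ is a $Perc$-robust $\RP$-reduction, then for $x\in A_{NO}$ the sample $z$ lies in $L_{NO}$ with probability exactly $1$, so the only way $\mathcal{B}$ can accept such an $x$ is if $\mathcal{A}$ errs on a genuine $L_{NO}$ instance; eliminating this residual one-sided error --- exactly as one converts a $\BPP$ decision procedure for an $\NP$-complete problem into an $\RP$ one, by searching for and then \emph{verifying} a witness on the accepting side --- yields an $\RP$ algorithm for $A$ and hence $\NP\seq\RP$. The $\coRP$ case is the mirror image, with $L_{YES}$ and $L_{NO}$ interchanged.

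I do not expect a genuine obstacle: the statement is essentially bookkeeping certifying that $Perc$-robust reductions compose as intended. The one point that needs care is the promise --- with probability $o(1)$ the sample $z$ can fall outside $L_{YES}\cup L_{NO}$, where $\mathcal{A}$ is unconstrained --- so the robustness of $r$ has to be used both to bound this event and to control the conditional error of $\mathcal{A}$ on its complement; and in the $\RP$/$\coRP$ statements one must check that the composed algorithm really does inherit one-sidedness. Since every bound in sight is of the form $1-o(1)$, a single union bound together with constant-factor amplification is all that is needed.
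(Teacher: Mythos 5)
Your argument for the $\BPP$ conclusion is correct and is the natural one: run the robust reduction $r$, draw a single sample $z \sim Perc(r(x))$, run the hypothetical algorithm $\mathcal{A}$ on $z$, and union-bound the two $o(1)$ failure events (the percolated instance leaving the correct side of the promise, and $\mathcal{A}$ erring on a promise instance). The paper states the proposition without an explicit proof, and this is surely the intended argument.

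The gap is in your treatment of the one-sided cases, specifically $\coRP$. You correctly note that the composed algorithm $\mathcal{B}$ still has two-sided error even when $r$ is an $\RP$-reduction, because under your reading $\mathcal{A}$ can err on genuine $L_{NO}$ instances with probability $o(1)$. Your repair for the $\RP$ case --- witness search via self-reducibility, i.e.\ the standard $\NP\subseteq\BPP \Rightarrow \NP\subseteq\RP$ fact --- is valid, though note it makes the $\RP$-reduction hypothesis superfluous: the $\BPP$ conclusion alone already gives $\NP\subseteq\RP$. But this trick is not symmetric. It exploits that the \emph{accepting} side of an $\NP$-complete problem has short verifiable witnesses; there is no analogous certificate for rejection. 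Interchanging $L_{YES}$ and $L_{NO}$ would require $\coNP$-style witnesses, which $\NP$-complete problems do not have. In fact $\NP\subseteq\coRP$ implies $\NP\subseteq\coNP$, hence $\NP=\coNP$, which is not known to follow from $\NP\subseteq\BPP$. So the "mirror image" does not close the $\coRP$ case, and as written your proof does not establish it.

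The clean way to get the one-sided conclusions is to let the one-sidedness of the \emph{reduction} do the work directly, rather than detouring through $\BPP$-to-$\RP$ amplification. Read "decides with high probability" as: $\mathcal{A}$ is correct on every promise instance, with the probability statement referring only to the percolation randomness, i.e.\ the only failure mode is that $z$ falls outside $L_{YES}\cup L_{NO}$. Then a $\coRP$-reduction guarantees that for $x\in A_{YES}$ the sample $z$ lies in $L_{YES}$ with probability $1$, so $\mathcal{B}$ accepts with probability $1$; the only error is on the NO side, giving $\coRP$. Symmetrically an $\RP$-reduction gives $\RP$, and the general case gives $\BPP$. Under your stronger reading, in which $\mathcal{A}$ itself has internal two-sided error, the $\coRP$ conclusion genuinely does not follow, and you should flag this rather than assert the mirror image.
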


For $\coRP$-reduction we have the following search decision version of the foregoing proposition.
\begin{proposition}\label{prop:coRP search}
	Let $L =(L_{YES},L_{NO})$ be a promise problem,
	and for each $y$ instance of $L$ let $Perc(y)$ be a distribution
	on instances of $L$	that is samplable in time that is polynomial in $|y|$.

    If $L$ is $\NP$-hard under a $Perc$-robust reduction,
    then, there is no polynomial time algorithm that when given an input $y$
	with high probability finds a witness for the assertion $Perc(y) \in B$, unless $\NP \seq \coRP$.
\end{proposition}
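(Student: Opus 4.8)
The plan is to follow the template one would use for Proposition~\ref{prop:BPP RP harndess} — compose the $Perc$-robust reduction with the hypothesized algorithm — but with one extra ingredient: since $L$ is a promise problem in $\NP$, a claimed witness can be verified in polynomial time, and this is exactly what upgrades a witness-finder for percolated instances into a one-sided, indeed $\coRP$-type, decision procedure for the source problem. Concretely, suppose there is a polynomial-time randomized algorithm $\mathcal{A}$ which on input $y$ (internally sampling $z\sim Perc(y)$) outputs, with probability $1-o(1)$, a string $w$ that the $\NP$-verifier $V$ for $L$ accepts as a witness for $z\in L_{YES}$ whenever $z\in L_{YES}$ (here $B$ in the statement is $L$). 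I will show that then $\NP\subseteq\coRP$. Let $r$ be the $Perc$-robust $\coRP$-reduction from an $\NP$-hard promise problem $A=(A_{YES},A_{NO})$ to $L$ supplied by the hypothesis, and recall that $V$ satisfies $z\in L_{YES}\Rightarrow\exists w\,V(z,w)=1$ and $z\in L_{NO}\Rightarrow\forall w\,V(z,w)=0$.

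Given this, I would build a randomized algorithm $M$ for $A$ as follows: on input $x$, compute $y=r(x)$, run $\mathcal{A}(y)$, let $z\sim Perc(y)$ be the sample it drew and $w$ the string it produced, and accept if and only if $V(z,w)=1$. For the analysis: the defining strengthening of the first item of a $\coRP$-robust reduction gives, for $x\in A_{YES}$, that $\Pr[Perc(r(x))\in L_{YES}]=1$, so $z\in L_{YES}$ with certainty; a witness for $z$ therefore exists, and $\mathcal{A}$ returns a valid one with probability $1-o(1)$, so $M$ accepts. For $x\in A_{NO}$, the second item gives $\Pr[Perc(r(x))\in L_{NO}]>1-o(1)$, and on the event $z\in L_{NO}$ there is \emph{no} $w$ with $V(z,w)=1$, so $M$ rejects. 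The decisive structural point is that $M$ accepts only after exhibiting a string that $V$ certifies, so an accepting run is always backed by a genuine witness for the assertion $Perc(r(x))\in L_{YES}$; together with the errorlessness of the $\coRP$-reduction on YES instances (and after boosting $\mathcal{A}$'s success probability by independent repetition, using the polynomial-time samplability of $Perc$), this is what pushes the error of $M$ onto a single side, yielding a $\coRP$-type algorithm for $A$. Since $A$ is $\NP$-hard and $\coRP$ is closed under polynomial-time many-one reductions, $\NP\subseteq\coRP$. The statement as phrased follows at once, since $\mathcal{A}$ is exactly an algorithm that finds a witness for ``$Perc(y)\in B$''.

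I expect the crux of the write-up to be the bookkeeping of error probabilities: one must ensure that the $o(1)$ slack from the reduction (the small chance that $Perc(r(x))$ leaves $L_{NO}$ on a NO-instance) and the $o(1)$ failure probability of $\mathcal{A}$ are charged to the correct sides, so that what comes out is genuinely a $\coRP$-type procedure rather than merely a two-sided-error one (the latter would already give $\NP\subseteq\BPP$). This is precisely where both the errorless YES side of a $\coRP$-robust reduction and the polynomial-time checkability of $L$-witnesses — which ensures $M$ never accepts without justification — must be combined; everything else (that $r$ and $V$ run in polynomial time and $Perc$ is polynomial-time samplable, so $M$ is polynomial time, together with the amplification of $\mathcal{A}$) is routine.
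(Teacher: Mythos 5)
Your overall composition---compute $y=r(x)$ via the $Perc$-robust $\coRP$-reduction, sample $z\sim Perc(y)$, run $\mathcal{A}$ to get a claimed witness $w$, and accept iff the polynomial-time verifier accepts $(z,w)$---is the right structure, and the paper gives no proof of its own to compare against. But there is a gap in the final accounting: the algorithm $M$ you build has \emph{two-sided} error, so what you have actually shown is $\NP\subseteq\BPP$, not $\NP\subseteq\coRP$. Concretely, $\coRP$ requires that $M$ accept every $x\in A_{YES}$ with probability exactly $1$. The $\coRP$-robustness of $r$ guarantees that $z\in L_{YES}$ with certainty, so a witness always exists, but $\mathcal{A}$ still fails to produce one with probability $o(1)$, and independent repetition only drives the failure probability toward $0$ without reaching it. So $M$ can reject a YES instance. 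Symmetrically, on $x\in A_{NO}$ there is an $o(1)$ chance that $Perc(r(x))$ leaves $L_{NO}$, and on that event $\mathcal{A}$ can output a string that $V$ genuinely accepts (indeed the sample may land in $L_{YES}$), so $M$ can accept a NO instance; verification bounds this false-acceptance probability but does not zero it out.

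Your sentence about verification ``pushing the error onto a single side'' conflates the two one-sided regimes. ``Accept only when a witness has been exhibited'' is the $\RP$-style guarantee (no false positives), and to make it exact you would need the NO side of the reduction to be errorless, i.e.\ a $Perc$-robust $\RP$-reduction ($\Pr[Perc(r(x))\in L_{NO}]=1$ on $A_{NO}$), which is not what the $\coRP$-robustness hypothesis gives you. Conversely, the $\coRP$-robustness you do have (errorless YES side) would matter for the $\coRP$ requirement ``never reject a YES instance,'' but that requirement is defeated by $\mathcal{A}$'s own $o(1)$ failure probability, which no amount of amplification removes. You should either weaken your claimed conclusion to $\NP\subseteq\BPP$, or explain what extra hypothesis (e.g., an errorless $\mathcal{A}$ on YES inputs, or a reduction that is simultaneously $\RP$- and $\coRP$-robust) is needed to land in a one-sided class; as written, the ``pushes the error onto a single side'' step does not hold.
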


An example of an application of Proposition~\ref{prop:coRP search}, consider the $\kSAT[3]$ problem.
Recall that by a result of H{\aa}stad~\cite{Hastad01}
given a satisfiable $\kSAT$ instance $\Phi$ it is $\NP$-hard to find an assignment that satisfies
significantly more than $7/8$ fraction of the constraints of $\Phi$.
A stronger conclusion follows from Theorem~\ref{thm:3SAT clause perc}.
Namely, given a satisfiable $\kSAT$ instance $\Phi$ it is hard to find an assignment
that satisfies significantly more than $7/8$ fraction of the constraints in a random subformula of $\Phi$,
obtained from $\Phi$ be deleting each clause with probability, say, $p=1/2$
(while, any assignment that satisfies $\Phi$, also satisfies every subformula of $\Phi$).

To construct robust reductions we use two methods.
One is to apply hardness of approximation results implied by the PCP Theorem~\cite{AS98,ALMSS98}.
Intuitively, the gap between YES-case and NO-case in such hardness results,
makes it possible to prove that percolated instances remain hard
as random deletions will not affect the optimum by much, keeping
(with high probability) the distinction between the YES-case and the NO-case.

When known hardness results do not suffice (as it is the case for the Vertex Cover problem),
or when hardness of approximation results are unlikely,
(as is the case for the $\SS$ problem which admits a PTAS) we ``blowup'' the instance
in a certain way and prove that this blowup preserves certain combinatorial properties
even when edges (or vertices) are deleted with high probability.
The most standard blowup technique is to replace, given a graph $G$, every vertex of $G$
with a large independent set and connect two independent sets that correspond to adjacent
vertices of $G$ by a complete bipartite graph.
This method has been previously used to prove the $\NP$-hardness of Feedback Arc Set on tournaments \cite{Ailon}.
Other variants of blowup are used for the Hamiltonian cycle problem and $\SS$.

\subsection{Related Works}\label{sec:related}

Randomly subsampling subgraphs by including each edge independently in the sample with probability $p$ has been studied extensively in works concerned with cuts and flows (e.g., \cite{karger}). More recently, sampling subgraphs has been used to find independent sets \cite{FeigeR15} (the main sampling technique used, e.g., the layers model is not independent-it introduces dependencies between sampled vertices). The effect of subsampling variables in mathematical relaxations of constraint satisfaction problems on the value of these relaxations was studied in \cite{barak}. Edge-percolated graphs have been also used to construct hard-instance for specific algorithms. For example, ~\cite{Kucera} proved that the well known greedy coloring algorithm performs poorly on the complete $r$-partite graph in which every edge is removed independently with probability $1/2$ and $r=n^{\epsilon}$ for $\epsilon>0$. Namely, for this graph $G$, even if vertices are considered in a random order by the greedy algorithm, with high probability $\Omega(\frac{n}{\log n})$ colors are used to color the percolated graph whereas $\chi(G) \leq n^{\epsilon}$.

The work of~\cite{gamarnik} examined the problem of finding a maximum independent set in regular graphs where the weights of the vertices are i.i.d. exponential random variables.
In this work the authors prove that for $3$-regular graphs, it is the case that for every $\epsilon$ the problem admits a $(1-\epsilon)$ approximation in time $n^kg(\epsilon)$ where $k$ is independent of $n$ and $g(\epsilon)$ depends only on $\epsilon$. They also prove that for large enough (constant) degree $\Delta$, the problem of approximating the expected size of a maximum independent set in such randomly weighted graphs is essentially as hard as solving MIS on graphs with maximal degree $\Delta$. Our hardness results are based on different ideas than those of \cite{gamarnik}.

The field of stochastic optimization is concerned with solving computational problems where elements of the instance (e.g., weights, the existence of edges) are random variables. Typically, the main focus in this line of works is to design algorithms that make decisions (at least for part of the input) before the random variables have been instantiated, with good \emph{expected} guarantees (e.g., \cite{Dean,KleinbergRT00}). Our work is exclusively concerned with fully instantiated problems. In addition, we focus on a very specific type of uncertainty, where every random variable is either zero or one. As a result, the sampled objects admit a straightforward combinatorial interpretation (e.g., randomly sampled subgraphs or formulas) that is lacking when considering random variables such as the exponential distribution. In addition, dealing with a restricted family of random variables makes it more challenging to prove hardness results regarding instances with edges or vertices whose weights are distributed as these random variables.

\medskip

When $p$ is sufficiently small, algorithms for random graphs and random formulas can be proven to find the optimal solution (with high probability) for percolated instance. For example, for graph coloring, it is known that for $p=\frac{1+\eps}{n}$ with some positive constant $\eps>0$,
with high probability $G(n,p)$ is $2$-degenerate, and hence can be 3-colored in polynomial time~\cite{L}. Since the property of being 2-degenerate is monotone, and as $2$-colorability can be decided in polynomial time, it follows that for every $n$-vertex graph and $p \leq \frac{1+\epsilon}{n}$, one can find in polynomial time with high probability a coloring of the edge percolated graph with the minimum number of colors.

Similar reasoning applies to $\kSAT[3]$ formulas. It is well known that there exists $c>0$ such that a random $\kSAT[3]$ formula in which each possible clause is added independently with probability $p=\frac{c}{n^2}$ can be solved with high probability using the pure literal heuristic~\cite{broder}. As observed in~\cite{broder}, if this heuristic fails in finding a satisfying assignment for a formula $\phi$ it will still fail to find a satisfying assignment if clauses are added to $\phi$. This implies that for any $n$-variable $\kSAT[3]$ formula $\Phi$, if $p \leq \frac{c}{n^2}$, then the clause-percolated formula is satisfiable and furthermore a satisfying assignment can be found in polynomial time using the pure literal heuristic.

\subsection{Preliminaries}

In this work, we will only consider simple graphs without multiple edges and self loops. When directed graphs are concerned we allow the two directed edges $(u,v)$ and $(v,u)$ to coexist-such a situation is not considered as having multiple edges.

Given a graph $G = (V,E)$ (that may be directed or undirected) and $p \in (0,1)$,
we denote by $G_{p,e} = (V,E')$ the probability space of graphs on the same set of vertices,
where each edge $e \in E$ is contained in $E'$ independently with probability $p$.
We will say that $G_{p,e}$ is obtained by edge percolation.
We define $G_{p,v} = (V',E')$ as the probability space of graphs,
in which every vertex $v \in V$ is contained in $V'$ independently with probability $p$,
and $E'$ is the subgraph of $G$ induced by the vertices $V'$.
We will sometime say that $G_{p,v}$ is obtained from $G$ by vertex percolation.
When dealing the running time on percolated instances we will always measure running time
in terms of the size of the percolated instance. For edge percolation, it makes little difference
as far as polynomial-time algorithms are concerned, as the percolated and original graphs have the same number of vertices.
For vertex percolation, this is not the case, since for tiny values of $p$
the size of the percolated graph will be typically much smaller than the size of original graph.
In this work we will be only dealing with the case where $p=\frac{1}{n^{1-\Omega(1)}}$,
hence with high probability the size of the percolated and the original graphs
are polynomially related as well.

Given a graph property $\cP$ and a sequence of probability distributions $(\mu_n)_{n}$ over $n$-vertex graphs,
we will say that $\cP$ holds with high probability if $\lim_{n \to \infty} \Pr_{G \sim \mu_n} [ G \in \cP ]=1.$

We say that an algorithm approximates a maximization problem within a ratio of $0 < a \leq 1$
(where $a$ to depend on the size of the instance)
if it returns a feasible solution that is at least $a \cdot OPT$, where $OPT$ is the value of
the optimal solution.
Similarly, we say that an algorithm approximates a minimization problem within a ratio of $b \geq  1$,
if it returns a feasible solution that is at most $b \cdot OPT$ where $OPT$ is the value of
the optimal solution.

We shall rely on the following version of the Chernoff bound
(see, e.g.,~\cite{Vazirani}).
\begin{theorem}[Multiplicative Chernoff bound]
Let $X_1,\dots,X_n$ be independent 0-1 random variables with $\Pr[X_i=1] = p$.
Then,
\[
	\Pr[|\sum_{i=1}^n X_i - pn | \geq \eps pn] \leq e^{-C \eps^2 pn},
\]
for some absolute constant $C > 0$.
\end{theorem}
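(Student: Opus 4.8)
The statement is the classical multiplicative Chernoff bound, so the plan is just the standard exponential--moment (Chernoff--Bernstein) argument: bound the two tails separately and combine them by a union bound. I would prove the inequality for $\eps\in(0,1)$, the range used throughout the paper (for $\eps pn>n$ the event is empty, and the intermediate range plays no role here). Write $S=\sum_{i=1}^n X_i$ and $\mu:=\E[S]=pn$. For the upper tail, fix $\lambda>0$ and apply Markov's inequality to the nonnegative variable $e^{\lambda S}$, obtaining $\Pr[S\ge(1+\eps)\mu]\le e^{-\lambda(1+\eps)\mu}\,\E[e^{\lambda S}]$. By independence and the inequality $1+x\le e^{x}$,
\[
\E\!\left[e^{\lambda S}\right]=\prod_{i=1}^n\E\!\left[e^{\lambda X_i}\right]=\bigl(1+p(e^{\lambda}-1)\bigr)^{n}\le e^{pn(e^{\lambda}-1)}=e^{\mu(e^{\lambda}-1)}.
\]
Substituting this and minimizing the exponent $\mu(e^{\lambda}-1)-\lambda(1+\eps)\mu$ over $\lambda>0$ (the minimum is at $\lambda=\ln(1+\eps)$) yields
\[
\Pr[S\ge(1+\eps)\mu]\le\left(\frac{e^{\eps}}{(1+\eps)^{1+\eps}}\right)^{\mu}\le e^{-\eps^2\mu/3},
\]
where the last step uses the elementary scalar inequality $(1+\eps)\ln(1+\eps)\ge\eps+\eps^2/3$ for $\eps\in[0,1]$, which is checked by comparing Taylor expansions (or by a short derivative computation).

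The lower tail is symmetric: for $\lambda>0$ apply Markov's inequality to $e^{-\lambda S}$, bound $\E[e^{-\lambda S}]\le e^{\mu(e^{-\lambda}-1)}$ in the same way, and minimize the resulting exponent over $\lambda>0$ (the minimum is at $\lambda=-\ln(1-\eps)$, which is positive since $\eps<1$), giving
\[
\Pr[S\le(1-\eps)\mu]\le\left(\frac{e^{-\eps}}{(1-\eps)^{1-\eps}}\right)^{\mu}\le e^{-\eps^2\mu/2},
\]
using $(1-\eps)\ln(1-\eps)\ge-\eps+\eps^2/2$ for $\eps\in[0,1)$. A union bound over the two events then gives $\Pr[\,|S-\mu|\ge\eps\mu\,]\le e^{-\eps^2\mu/3}+e^{-\eps^2\mu/2}\le 2e^{-\eps^2\mu/3}$, which is the asserted bound up to the constant factor $2$; that factor is absorbed into the exponent in the usual way (and in every application in this paper $\eps$ is constant and $pn\to\infty$, so the two-sided bound $2e^{-\eps^2pn/3}$ already suffices).

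I do not expect a genuine obstacle here: beyond Markov's inequality and independence, the only content is the two one-variable inequalities $(1+\eps)\ln(1+\eps)\ge\eps+\eps^2/3$ and $(1-\eps)\ln(1-\eps)\ge-\eps+\eps^2/2$ on $(0,1)$, which are routine calculus. Since this two-sided form is completely standard, one could alternatively just cite it (cf.~\cite{Vazirani}); the self-contained derivation above is included only for completeness.
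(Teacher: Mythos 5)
The paper does not actually prove this statement: the Chernoff bound is quoted as a known tool with a pointer to \cite{Vazirani}, so there is no in-paper proof to compare against. Your self-contained derivation is the standard exponential-moment argument and is correct in substance: Markov on $e^{\pm\lambda S}$, the bound $\E[e^{\lambda S}]\le e^{\mu(e^{\lambda}-1)}$ from independence and $1+x\le e^x$, optimization at $\lambda=\pm\ln(1\pm\eps)$, and the two scalar inequalities $(1+\eps)\ln(1+\eps)\ge\eps+\eps^2/3$ and $(1-\eps)\ln(1-\eps)\ge-\eps+\eps^2/2$ on $[0,1)$, both of which check out by differentiating twice. One small caution: your closing remark that the leading factor of $2$ ``is absorbed into the exponent in the usual way'' is not literally true as a uniform statement --- for $\eps^2 pn$ bounded, $2e^{-\eps^2 pn/3}$ can exceed $e^{-C\eps^2 pn}$ for every fixed $C>0$ (indeed the theorem as written is vacuous or false for tiny $\eps^2 pn$, e.g.\ $n=1$). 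You already flag the honest resolution, namely that in every application here $\eps$ is a fixed constant and $pn\to\infty$, so the two-sided bound $2e^{-\eps^2 pn/3}$ (or equivalently a slightly smaller $C$ once $\eps^2 pn$ is bounded below) is all that is used; it would be cleaner to state the conclusion as $\le 2e^{-\eps^2 pn/3}$ and drop the absorption claim. You also sensibly restrict to $\eps\in(0,1)$, which is the regime in which the multiplicative form with an $\eps^2$ in the exponent is valid and is the only regime the paper invokes.
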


\begin{corollary}~\label{cor:concentration for m sequences}
Let $X_1^{(1)},\dots,X_n^{(1)}, \dots, X_1^{(m)},\dots,X_n^{(m)}$ be independent 0-1 random variables with $\Pr[X_i^{(j)}=1] = p$.
Then, for some absolute constant $C > 0$ it holds that
\[
	\Pr[\exists j \in [m]:|\sum_{i=1}^n X_i^{(j)} - pn | \geq \sqrt{C pn \log(m)}] \leq m^{-3}.
\]
\end{corollary}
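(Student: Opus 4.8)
The plan is to apply the Multiplicative Chernoff bound to each of the $m$ blocks of variables separately, with a deviation parameter tuned so that each block fails with probability at most $m^{-4}$, and then to take a union bound over the $m$ blocks. For $j \in [m]$ write $S_j = \sum_{i=1}^n X_i^{(j)}$, so that $\E[S_j] = pn$ and, since the variables within block $j$ are independent $0$-$1$ variables with success probability $p$, the Multiplicative Chernoff bound applies to $S_j$. Let $C_0 > 0$ denote the absolute constant appearing in that bound, and set
\[
	\eps \;:=\; \sqrt{\frac{4 \log m}{C_0 \cdot pn}}, \qquad\text{so that}\qquad \eps\, pn \;=\; \sqrt{\tfrac{4}{C_0}\, pn \log m} \quad\text{and}\quad C_0 \eps^2 pn \;=\; 4\log m .
\]

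With this choice the Chernoff bound gives, for each fixed $j$,
\[
	\Pr\!\left[\, |S_j - pn| \;\geq\; \sqrt{\tfrac{4}{C_0}\, pn \log m}\,\right] \;\leq\; e^{-C_0 \eps^2 pn} \;=\; e^{-4\log m} \;=\; m^{-4},
\]
and a union bound over $j \in [m]$ then yields
\[
	\Pr\!\left[\, \exists\, j \in [m] : |S_j - pn| \;\geq\; \sqrt{\tfrac{4}{C_0}\, pn \log m}\,\right] \;\leq\; m \cdot m^{-4} \;=\; m^{-3},
\]
which is exactly the claimed inequality with absolute constant $C := 4/C_0$.

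The one point requiring a little care — and the only mild obstacle — is the admissible range of the deviation parameter: the Multiplicative Chernoff bound is being invoked with $\eps = \sqrt{4\log m/(C_0 pn)}$, which is a meaningful two-sided tail bound only when $\eps \le 1$, i.e. when $pn \geq (4/C_0)\log m$. In all of our applications $pn$ is polynomially large while $m$ is at most exponential, so this holds. In the complementary regime $pn < (4/C_0)\log m$ the lower-tail event is vacuous (as $S_j \ge 0$ forces $|S_j - pn| < \sqrt{(4/C_0)pn\log m}$ whenever that threshold exceeds $pn$), and the upper tail $S_j \ge (1+\eps)pn$ with $\eps \ge 1$ is controlled by the one-sided "$\eps \ge 1$" form of the Chernoff bound; alternatively one may simply state the corollary under the hypothesis $pn \ge (4/C_0)\log m$, which suffices for every use made of it. Apart from this boundary bookkeeping, the argument is a direct substitution into Chernoff followed by a union bound.
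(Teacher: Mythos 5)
Your argument is essentially the paper's proof: apply the multiplicative Chernoff bound to each block with deviation chosen so that each block fails with probability at most $m^{-4}$, then take a union bound over the $m$ blocks (the paper writes this as $1-(1-m^{-4})^m \leq m^{-3}$, which is the same estimate). The only difference is that you explicitly address the $\eps \leq 1$ validity range of the bound, a point the paper's statement of Chernoff glosses over; this extra care is correct and harmless.
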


\begin{proof}
	By the multiplicative Chernoff bound above for each $j \in [m]$ it holds that
	$\Pr[|\sum_{i=1}^n X_i^{(j)} - pn | \geq \sqrt{C pn \log(m)}]  \leq e^{-C \log(m)} < m^{-4}$,
	where $C > 0$ is some absolute constant.
	Therefore,
\begin{eqnarray*}
	\Pr[\exists j \in [m]:|\sum_{i=1}^n X_i^{(j)} - pn | \geq \sqrt{C pn \log(m)}]
	& = & 1 - \Pr[\forall j \in [m]:|\sum_{i=1}^n X_i^{(j)} - pn | \leq \sqrt{C pn \log(m)}]  \\
	& \leq & 1 - (1 - m^{-4})^m \\
	& \leq & m^{-3},
\end{eqnarray*}
as required.
\end{proof}
\section{Graph Coloring, Independent Set and Percolation}

An independent set in a graph $G=(V,E)$ is a set of vertices that spans no edge.
The independence number of a graph is the size of an independent set of maximum size.
Given a graph $G$, we denote the independence number of $G$ by $\alpha(G)$.
A legal coloring of a graph $G$ is an assignment of colors to vertices
that no two adjacent vertices have the same color. The chromatic number
of $G$, denoted by $\chi(G)$ is the minimum number of colors that allows
a legal coloring of $G$. Clearly, $\chi(G) \cdot \alpha(G) \geq n$.

In this section we demonstrate the hardness of approximating $\alpha(G)$ and $\chi(G)$
in percolated graphs for both edge and vertex percolation. We start with edge percolation.

\paragraph{Edge percolation}
Here a crucial observation is that a graph
without large independent sets cannot contain large sets that span a too small number of edges.
We need first the following lemma, due to Turan (see, e.g.~\cite{AlonSpencer}).

\begin{lemma}\label{lem:is rand}
Every graph $H$ with $l$ vertices and $e$ edges contains an independent set of size at least $\frac{l^2}{2e+l}$.
\end{lemma}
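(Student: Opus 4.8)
The plan is to prove the (stronger) Caro--Wei bound $\alpha(H) \ge \sum_{v \in V(H)} \frac{1}{d(v)+1}$ and then derive the stated estimate by convexity. For the Caro--Wei bound I would use the random-ordering argument: pick a uniformly random linear order $\prec$ on the $l$ vertices of $H$, and let $I \seq V(H)$ be the set of vertices $v$ that precede, under $\prec$, all of their neighbors. First observe that $I$ is an independent set: if $uv \in E(H)$, then exactly one of $u \prec v$ or $v \prec u$ holds, so at least one of $u,v$ has a neighbor preceding it and hence is not in $I$. Next, for a fixed vertex $v$ with degree $d(v)$, the event $v \in I$ depends only on the relative order of $v$ and its $d(v)$ neighbors, and by symmetry $v$ is the smallest among these $d(v)+1$ vertices with probability exactly $\frac{1}{d(v)+1}$. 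By linearity of expectation, $\E[|I|] = \sum_{v} \frac{1}{d(v)+1}$, so some ordering yields an independent set of at least this size, proving the Caro--Wei bound.

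It then remains to lower bound $\sum_{v} \frac{1}{d(v)+1}$. Here I would invoke the Cauchy--Schwarz inequality in the form
\[
	\left( \sum_{v} \frac{1}{d(v)+1} \right) \left( \sum_{v} (d(v)+1) \right) \;\ge\; \left( \sum_{v} 1 \right)^2 = l^2 .
\]
Since $\sum_{v} d(v) = 2e$ (each edge contributes $2$ to the degree sum), we have $\sum_{v} (d(v)+1) = 2e + l$, and dividing gives $\sum_{v} \frac{1}{d(v)+1} \ge \frac{l^2}{2e+l}$. Combining with the previous paragraph, $\alpha(H) \ge \frac{l^2}{2e+l}$, as claimed. (Equivalently, one could phrase the last step as Jensen's inequality applied to the convex function $t \mapsto \frac{1}{t+1}$ evaluated at the average degree $2e/l$.)

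There is no real obstacle here; this is a classical fact and every step is routine. The only points that warrant a line of care are (i) checking that $I$ is genuinely independent — which is immediate from the observation above about the two orientations of an edge — and (ii) noting that the symmetry computation $\Pr[v\in I] = \frac{1}{d(v)+1}$ is valid for every vertex regardless of how the neighborhoods overlap, because we only condition on the relative order within $\{v\}\cup N(v)$. One small remark: since $\alpha(H)$ is an integer, the bound could be strengthened to $\lceil l^2/(2e+l)\rceil$, but the stated (non-rounded) form is all that is needed in the sequel, so I would not bother with the rounding.
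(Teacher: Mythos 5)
Your proof is correct. Note that the paper does not actually prove this lemma: it simply states it as a known result of Tur\'an and cites Alon and Spencer's \emph{The Probabilistic Method}. The Caro--Wei random-permutation argument you give, followed by Cauchy--Schwarz (equivalently, convexity of $t \mapsto 1/(t+1)$ at the average degree $2e/l$), is precisely the standard proof that appears in that reference, so your argument coincides with the one the paper implicitly relies on.
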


\begin{corollary}\label{cor:small is many edges}
Let $G=(V,E)$ be an $n$-vertex graph satisfying $\alpha(G)<k$.
Then every set of vertices of size $l \geq k$ spans at least $l(l-k)/2k$ edges.
\end{corollary}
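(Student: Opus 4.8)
The plan is to apply Lemma~\ref{lem:is rand} (Turán's bound) to the subgraph induced by an arbitrary vertex set $S$ of size $l \geq k$, and combine it with the hypothesis $\alpha(G) < k$. First I would fix such a set $S$ with $|S| = l$ and let $e$ denote the number of edges spanned by $S$. Since any independent set of the induced subgraph $G[S]$ is also an independent set of $G$, the hypothesis $\alpha(G) < k$ forces the independence number of $G[S]$ to be strictly less than $k$, i.e. at most $k-1 < k$.

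Next I would invoke Lemma~\ref{lem:is rand} with the graph $H = G[S]$, which has $l$ vertices and $e$ edges, to get that $G[S]$ contains an independent set of size at least $\frac{l^2}{2e+l}$. Chaining this with the previous observation gives $\frac{l^2}{2e+l} \leq \alpha(G[S]) < k$, hence $l^2 < k(2e+l)$. Rearranging, $2ke > l^2 - kl = l(l-k)$, so $e > \frac{l(l-k)}{2k}$, which is even slightly stronger than the claimed $e \geq l(l-k)/2k$. Since $S$ was an arbitrary set of size $l \geq k$, this establishes the corollary.

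I do not anticipate a genuine obstacle here, since the statement is essentially an algebraic rearrangement of Turán's lemma; the only point requiring a line of care is the monotonicity remark that $\alpha(G[S]) \leq \alpha(G)$, which holds because removing vertices cannot create new independent sets — any independent set in $G[S]$ sits inside $V$ and spans no edge of $G$. One should also note that when $l = k$ the bound $l(l-k)/2k = 0$ is vacuously true, so the interesting content is for $l > k$; no separate case analysis is needed because the inequality $e > l(l-k)/(2k)$ coming out of the computation is uniform in $l \geq k$.
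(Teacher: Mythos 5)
Your proof is correct and follows essentially the same route as the paper: apply Turán's lemma (Lemma~\ref{lem:is rand}) to the induced subgraph $G[S]$ and use the monotonicity $\alpha(G[S]) \leq \alpha(G)$ to derive the edge bound. The only cosmetic difference is that you keep the strict inequality $\alpha(G) < k$ and obtain $e > l(l-k)/(2k)$, while the paper relaxes to $\alpha(G) \leq k$ and gets $e \geq l(l-k)/(2k)$; both suffice.
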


\begin{proof}
Let $H$ be a subgraph of $G$ induced by $l$ vertices, and suppose that $H$ spans $e$ edges.
Then, by Lemma~\ref{lem:is rand} we have $\alpha(H) \geq \frac{l^2}{2e+l}$.
On the other hand, $\alpha(H) \leq \alpha (G) \leq k$, and hence $\frac{l^2}{2e+l} \leq k$, as required.
\end{proof}

\begin{lemma}\label{lem:independent_perc}
Let $G=(V,E)$ be an $n$-vertex graph.
Then, with high probability $\alpha(G_{p,e})\leq O\left(\frac{\alpha(G)}{p}\log (np)\right)$.
\end{lemma}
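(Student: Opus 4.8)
The plan is a straightforward first-moment (union bound) argument, using Corollary~\ref{cor:small is many edges} to argue that any candidate large independent set of $G_{p,e}$ must, already in $G$, span many edges, all of which happen to be deleted. Write $k := \alpha(G)$ and fix $l := \frac{Ck}{p}\log(np)$ for a large absolute constant $C$ to be chosen; I will show that with high probability $G_{p,e}$ has no independent set of size $l$, which is exactly the claimed bound $\alpha(G_{p,e}) < l = O\!\left(\frac{\alpha(G)}{p}\log(np)\right)$. If $l \ge n$ the statement is vacuous (since $G_{p,e}$ has $n$ vertices), so I may assume $l < n$. All estimates are carried out in the regime of interest $p = n^{-1+\Omega(1)}$, so that $np = n^{\Omega(1)} \to \infty$; in particular $l \ge 2k$ and $l \to \infty$.

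Fix a vertex set $S$ with $|S| = l$. Since $l \ge 2k$, Corollary~\ref{cor:small is many edges} gives that $G[S]$ spans at least $\frac{l(l-k)}{2k} \ge \frac{l^2}{4k}$ edges. For $S$ to be independent in $G_{p,e}$ each of these edges must be deleted, which happens with probability at most $(1-p)^{l^2/(4k)} \le e^{-pl^2/(4k)}$. Taking a union bound over all $\binom{n}{l} \le (en/l)^l$ choices of $S$,
\[
\Pr[\alpha(G_{p,e}) \ge l] \;\le\; \binom{n}{l}\, e^{-pl^2/(4k)} \;\le\; \left(\frac{en}{l}\right)^l e^{-pl^2/(4k)} \;=\; \exp\!\left( l\left(\ln\tfrac{en}{l} - \tfrac{pl}{4k}\right)\right).
\]
Hence it suffices to choose $l$ so that $\frac{pl}{4k} \ge 2\ln\frac{en}{l}$: then the exponent is at most $-l\ln\frac{en}{l} \le -l$ (using $l < n$, so $\ln\frac{en}{l}\ge 1$), and since $l \to \infty$ the probability is $e^{-\Omega(l)} = o(1)$.

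It remains to verify the choice of the constant, which is also the only place requiring a little care. With $l = \frac{Ck}{p}\log(np)$ one has $\frac{en}{l} = \frac{enp}{Ck\log(np)} \le enp$, so $\ln\frac{en}{l} \le 1 + \ln(np) = O(\log(np))$, and therefore $\frac{pl}{4k} = \frac{C}{4}\log(np)$ exceeds $2\ln\frac{en}{l}$ once $C$ is a large enough absolute constant. This is precisely the step that forces a $\log(np)$ factor rather than merely $\log n$: the union-bound "entropy" term $l\ln(en/l)$ must be dominated by the "edge-deletion" term $pl^2/(4k)$, and since $l$ itself already carries a factor $1/p$, the number of relevant sets is only $e^{O(l\log(np))}$. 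I expect no genuine obstacle here — the argument is essentially a Chernoff/first-moment computation — the only mild subtleties being the degenerate case $l \ge n$ (handled trivially above) and the need to stay in the regime $np \to \infty$, consistent with the scope of the paper.
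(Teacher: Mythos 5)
Your proof is correct and follows essentially the same route as the paper: fix $l = \Theta\!\left(\frac{\alpha(G)}{p}\log(np)\right)$, use Corollary~\ref{cor:small is many edges} to lower-bound the number of edges spanned by any $l$-set in $G$, and take a union bound over all $\binom{n}{l}$ such sets. One tiny slip: you set $k := \alpha(G)$, but the corollary's hypothesis is $\alpha(G) < k$, so you should take $k = \alpha(G)+1$ as the paper does (the bound then becomes $\frac{l(l-k)}{2k}\geq\frac{l^2}{4k}$ for $l\geq 2k$, which is the same up to constants). The degenerate case $l\geq n$ and the explicit tracking of the $\log(np)$ factor are nice points of added care that the paper leaves implicit.
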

\begin{proof}
Let $k=\alpha(G)+1$. Let $C>0$ be a large enough constant.
By the corollary above, every set of size $l=C\frac{\alpha(G)}{p}\log (np)$,
spans at least $\frac{l(l-k)}{2k}$ edges. Hence,
by taking union bound over all subsets of size $l$, the probability there exists
a set of size $l$ in $G_p$ that spans no edge is at most
\[
	{n \choose l}  \cdot (1-p)^{\frac{l(l-k)}{2k}}
	< \left( \frac{e n}{l} \right) ^l \cdot \exp \left(- p \cdot \frac{l(l-k)}{2k} \right)
	< (np)^{- \Omega(l)},
\]
where the last inequality uses the choices of $l$ and $k$, implying that $\left( \frac{e n}{l} \right)^l < (np)^l$
and $\exp( - p \frac{l(l-k)}{2k}) < \exp(- \Omega(l \cdot \log(np))) = (np)^{-\Omega(l)}$.
\end{proof}

We observe that in general, the upper bound above cannot be improved,
as it is well known that the independence number of $G(n,p)$ is
$O\left(\frac{\log (np)}{p}\right)$ with high probability (see, e.g.,~\cite{Random_Graphs}).

\medskip

We are now ready to prove that it is hard to approximate the independence number
and the chromatic number on edge percolated graphs.
For this we consider the following gap problem which we call $\GapColoring(\chi,\alpha)$,
where the YES-instances are all graphs $G$ with $\chi(G) \leq \chi$
and the NO-instances are all graphs $G$ with $\alpha(G) \leq \alpha$.
(We assume that for $n$-vertex graphs $G$
the parameters of the problem  are such that $\alpha(G) \cdot \chi(G) < n$,
so that the YES-instances and the NO-instances are disjoint sets.)

\begin{theorem}\label{thm:MIS edge perc}
Let $\eps \in (0,1)$ be a fixed constant.
Let $p>\frac{1}{n^{1-2\eps}}$, and
let $\chi = n^\eps$ and $\alpha = \frac{n^\eps}{p}$,
where $n$ denotes the number of vertices in a graph.
Then, the $\GapColoring(\chi, \alpha)$ problem is $\NP$-hard under a robust reduction
with respect to edge percolation with parameter $p$.

In particular, unless $\NP \subseteq \BPP$ there is no polynomial time algorithm that approximates either
$\alpha(G_{p,e})$ or $\chi(G_{p,e})$ within a factor $\frac{1}{pn^{1-2\epsilon}}$ (resp. $pn^{1-2\epsilon}$).
\end{theorem}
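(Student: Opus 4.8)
The plan is to compose the hardness-of-approximation theorem of Feige and Kilian~\cite{FeigeKilian} with the robustness of the independence number under edge percolation supplied by Lemma~\ref{lem:independent_perc}. Recall that~\cite{FeigeKilian} gives, for every constant $\delta\in(0,1)$, a randomized polynomial-time reduction from an $\NP$-hard problem that sends a YES-instance to a graph $H$ on $N$ vertices with $\chi(H)\le N^{\delta}$ and a NO-instance to a graph $H$ with $\alpha(H)\le N^{\delta}$; equivalently, $\GapColoring(N^\delta,N^\delta)$ is $\NP$-hard under randomized reductions. Fix $\delta=\eps/2$ and take the robust reduction $r$ to be exactly this reduction, with output $G:=H$ on $n:=N$ vertices and percolation parameter $p>\frac1{n^{1-2\eps}}$; note that then $np>n^{2\eps}\to\infty$.

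Next I would verify the two robustness requirements with target problem $B=\GapColoring(n^\eps,n^\eps/p)$. On the YES side, $\chi(G)\le n^{\delta}\le n^{\eps}$, so $r(x)\in B_{YES}$; and since a proper coloring of $G$ stays proper after deleting edges, $\chi(G_{p,e})\le\chi(G)\le n^\eps$ with probability $1$ (so $r$ is in fact a robust $\coRP$-reduction on this side). On the NO side, $\alpha(G)\le n^{\delta}\le n^\eps\le n^\eps/p$, so $r(x)\in B_{NO}$; and Lemma~\ref{lem:independent_perc} gives, with high probability,
\[
\alpha(G_{p,e})\;\le\;O\!\left(\frac{\alpha(G)}{p}\log(np)\right)\;=\;O\!\left(\frac{n^{\delta}\log n}{p}\right)\;=\;o\!\left(\frac{n^{\eps}}{p}\right),
\]
the last step because $\delta<\eps$. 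In particular $\Pr[G_{p,e}\in B_{NO}]=1-o(1)$. Finally one checks that the YES- and NO-instances of $B$ are disjoint: $\alpha\cdot\chi=n^{2\eps}/p<n$ exactly when $p>\frac1{n^{1-2\eps}}$, which is why this is the admissible range of $p$. By Proposition~\ref{prop:BPP RP harndess}, deciding $\GapColoring(n^\eps,n^\eps/p)$ on percolated instances is $\NP$-hard unless $\NP\subseteq\BPP$.

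For the ``in particular'' claim, suppose a polynomial-time algorithm $\mathcal A$ approximates $\alpha(G_{p,e})$ within the ratio $\frac1{pn^{1-2\eps}}\le 1$. On a percolated YES-instance, $\chi(G_{p,e})\le n^\eps$ forces $\alpha(G_{p,e})\ge n/\chi(G_{p,e})\ge n^{1-\eps}$, so $\mathcal A$ outputs at least $\frac{n^{1-\eps}}{pn^{1-2\eps}}=\frac{n^\eps}{p}$; on a percolated NO-instance $\mathcal A$ outputs at most $\alpha(G_{p,e})=o(n^\eps/p)$ with high probability. Thresholding $\mathcal A$'s output at $\frac12\cdot\frac{n^\eps}{p}$ thus solves the gap problem with high probability, contradicting the hardness just proved (unless $\NP\subseteq\BPP$). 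The statement for $\chi$ is dual: on YES-instances $\chi(G_{p,e})\le n^\eps$, while on NO-instances $\chi(G_{p,e})\ge n/\alpha(G_{p,e})=\omega(pn^{1-\eps})$ with high probability, so an approximation within ratio $pn^{1-2\eps}$ again distinguishes the two cases.

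I do not expect a genuine obstacle here: the real work is already packaged in Lemma~\ref{lem:independent_perc} --- ultimately in Corollary~\ref{cor:small is many edges}, which says that a graph with no independent set of size $k$ cannot contain a large vertex set spanning few edges, so percolation cannot manufacture large independent sets. What needs care is only the bookkeeping: making sure the Feige--Kilian gap $N^\delta$ versus $N^\delta$ survives the $\Theta(\frac{\log(np)}{p})$ inflation of $\alpha$ caused by percolation (this is why we pick $\delta=\eps/2$), and checking that the window $p>\frac1{n^{1-2\eps}}$ is precisely what keeps the YES- and NO-instances disjoint so that the resulting gap problem is non-degenerate.
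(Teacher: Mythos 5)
Your proof is correct and follows essentially the same route as the paper: both invoke the Feige--Kilian inapproximability of $\GapColoring$, keep the YES case trivially via monotonicity of $\chi$ under edge deletion, and handle the NO case via Lemma~\ref{lem:independent_perc}. Your version is merely more explicit about the bookkeeping (choosing $\delta=\eps/2$, checking disjointness of the gap, and spelling out the derivation of the approximation hardness from the gap hardness, which the paper compresses into the remark that $\chi(G)\cdot\alpha(G)\geq n$).
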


\begin{proof}
By a result of Feige and Kilian \cite{FeigeKilian}, for any $\eps > 0$ it is $\NP$-hard to decide whether a
given $n$-vertex graph $G = (V,E)$ satisfies $\chi(G)\leq n^{\eps}$ or $\alpha(G) \leq n^{\eps/2}$.
Let $\tilde{G} = G_{p,e}$ be the $p$-edge percolated subgraph of $G$.
Next we claim the following.

\begin{description}
   \item[YES-case:] If $\chi(G) \leq n^{\eps}$, then $\chi(\tilde{G}) < n^\eps$.
   \item[NO-case:] If $\alpha(G) \leq n^{\eps/2}$, then $\alpha(\tilde{G})  < \frac{n^{\eps}}{p}$.
\end{description}

The YES-case is clear, since $\tilde{G}$ is obtained from $G$ by removing edges which can only decrease the chromatic number.

For the NO-case suppose that $\alpha(G) \leq n^{\eps/2}$. Then, by Corollary~\ref{lem:independent_perc}
it follows that with high probability
$\alpha(G_{p,e})\leq O\left(\frac{\alpha(G)}{p}\log (np)\right) < \frac{n^{\eps}}{p}$,
as required.

The ``in particular'' part of the theorem
follows from the fact that an $n$-vertex graph $G$ it holds that $\chi(G) \cdot \alpha(G) \geq n$.
\end{proof}

\begin{remark}
Note that for constant $p>0$ (e.g., $p=1/2$) this theorem establishes
an inapproximability for the independence number of $G_{p,e}$,
that matches the inapproximability for the worst case.
\end{remark}

\begin{remark}
Note also that for $p > \frac{1}{n^{1-\eps}}$
(in fact, for $p > \frac{C \log(n)}{n}$) such random percolated
graphs have maximal degree at most $O(pn)$ with high probability.
Therefore, such graphs $\tilde{G}$ can be colored efficiently using $O(pn)$ colors.
In particular, with high probability $\tilde{G}$ contains an independent set of size
$\Omega(1/p)$ and hence, $\alpha(\tilde{G})$ can be approximated within a factor
of $1/pn$ on $p$-percolated instances.
\end{remark}

\paragraph{Vertex percolation}
We now move on to deal with vertex percolation.
We show that approximating the $\alpha(G)$ and $\chi(G)$ on percolated instances
is essentially as hard as worst-case instances,
even if vertices remain with probability $\frac{1}{n^{1-\eps}}$,
where $n$ is the number of vertices in the graph for any $\eps \in (0,1)$.
We do it again by proving hardness of the gap problem $\GapColoring$ for percolated instances.

Note that in the case of vertex percolation, we (in)approximablity guarantee
should depend on the number of vertices in the percolated graph $G_{p,v}$,
and not in the original graph.

\begin{theorem}\label{thm:MIS vertex perc}
Let $\eps,\delta \in (0,1)$ be fixed constants.
Then, for any $p>\frac{1}{n^{1-\delta}}$
the $\GapColoring(\chi, \alpha)$ problem is $\NP$-hard under a robust reduction
with respect to vertex percolation with parameter $p$,
where $\chi = m^{\eps/2}$ and $\alpha = m^{\eps/2}$,
with $m$ denoting the number of vertices in the vertex percolated graph.

In particular, unless $\NP\subseteq\BPP$ there is no polynomial time algorithm that approximates either
$\alpha(G_{p,v})$ or $\chi(G_{p,v})$ within a factor $m^{1-\eps}$ for constant any $\eps>0$.
\end{theorem}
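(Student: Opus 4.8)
The plan is to reuse the Feige--Kilian inapproximability exactly as in the proof of Theorem~\ref{thm:MIS edge perc}, exploiting that vertex percolation is far gentler than edge percolation: deleting vertices can only \emph{decrease} both $\chi$ and $\alpha$, so no analogue of the Tur\'an-type Lemma~\ref{lem:independent_perc} is needed. The only quantity requiring probabilistic control is the number $m$ of vertices surviving the percolation, and how it compares with the number $N$ of vertices of the starting graph.

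\emph{The reduction.} Fix the auxiliary constant $\eps' := \delta\eps/3$ and invoke~\cite{FeigeKilian}: it is $\NP$-hard to decide whether an $N$-vertex graph $G$ satisfies $\chi(G)\le N^{\eps'}$ or $\alpha(G)\le N^{\eps'/2}$. Take the reduction $r$ to be the identity $G\mapsto G$, viewing the output as an instance of $\GapColoring(m^{\eps/2},m^{\eps/2})$; for the un-percolated graph $m=n=N$, and the hypothesis $p>1/n^{1-\delta}$ of the theorem is precisely $p>1/N^{1-\delta}$. Since $\eps'<\eps/2$, in the YES-case $\chi(G)\le N^{\eps'}<N^{\eps/2}$, so $r(G)$ is a genuine YES-instance, and in the NO-case $\alpha(G)\le N^{\eps'/2}<N^{\eps/2}$, so $r(G)$ is a genuine NO-instance.

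\emph{Robustness.} For $G_{p,v}$: in the YES-case it is an induced subgraph of $G$, so $\chi(G_{p,v})\le\chi(G)\le N^{\eps'}$ deterministically; in the NO-case $\alpha(G_{p,v})\le\alpha(G)\le N^{\eps'/2}$ deterministically. The only bad event is that $m$ is so small that $m^{\eps/2}$ drops below these bounds. Since $pN>N^{\delta}\to\infty$, the multiplicative Chernoff bound gives $m\ge pN/2>N^{\delta}/2$ with probability $1-e^{-\Omega(N^{\delta})}=1-o(1)$, whence $m^{\eps/2}\ge (N^{\delta}/2)^{\eps/2}\ge N^{\delta\eps/3}=N^{\eps'}$ once $N$ exceeds a constant depending on $\eps,\delta$ (the polynomial factor $N^{\delta\eps/6}$ swallows $2^{\eps/2}$). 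So with high probability $\chi(G_{p,v})\le N^{\eps'}\le m^{\eps/2}$ in the YES-case and $\alpha(G_{p,v})\le N^{\eps'/2}<N^{\eps'}\le m^{\eps/2}$ in the NO-case; thus $r$ is a $Perc$-robust reduction from the ($\NP$-hard) Feige--Kilian promise problem to $\GapColoring(m^{\eps/2},m^{\eps/2})$, and Proposition~\ref{prop:BPP RP harndess} yields the first ``unless $\NP\subseteq\BPP$'' conclusion.

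\emph{Inapproximability, and the main obstacle.} In the YES-case $\chi(G_{p,v})\le N^{\eps'}$ forces $\alpha(G_{p,v})\ge m/\chi(G_{p,v})\ge m/N^{\eps'}$, while in the NO-case $\alpha(G_{p,v})\le N^{\eps'/2}$ and hence $\chi(G_{p,v})\ge m/N^{\eps'/2}$; a short computation using $m\ge N^{\delta}/2$ and $\eps'=\delta\eps/3$ shows that the YES- and NO-values of $\alpha$ (respectively of $\chi$) on $G_{p,v}$ differ by a multiplicative factor exceeding $m^{1-\eps}$, with a separating threshold computable from $m$ and $N$. Hence an algorithm approximating $\alpha(G_{p,v})$ within $1/m^{1-\eps}$, or $\chi(G_{p,v})$ within $m^{1-\eps}$, would decide the $\NP$-hard gap problem with high probability. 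There is no genuine combinatorial difficulty here (contrast the edge-percolation case); the step needing the most care is the constant bookkeeping: one must pick a single $\eps'=\eps'(\eps,\delta)$ making all the inequalities $N^{\eps'}\le m^{\eps/2}$ and the approximation-gap bound hold \emph{uniformly over all admissible $p$}. This is possible because $pN\ge N^{\delta}$ no matter how close $p$ is to $1$, so $p$ near the threshold $1/N^{1-\delta}$ is the worst case, which the choice $\eps'=\delta\eps/3$ accommodates.
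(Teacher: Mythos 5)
Your proposal is correct and follows essentially the same route as the paper: the reduction is the identity map from a Feige--Kilian hard instance, robustness is free because vertex deletion only decreases $\chi$ and $\alpha$, and the only probabilistic ingredient is a Chernoff bound on the number $m$ of surviving vertices together with the relation $\chi\cdot\alpha\ge m$. The one cosmetic difference is that you fix $\eps'=\delta\eps/3$ uniformly over all admissible $p$, whereas the paper sets $\eta=\eps c/3$ with $c=\log(pn)/\log n$ depending on $p$; both choices make the same inequalities close, and your version is if anything slightly cleaner to state.
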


\begin{proof}
For a given $p>\frac{1}{n^{1-\delta}}$ let $c = \frac{\log(pn)}{\log(n)} \in (\delta, 1)$
be  such that $p = \frac{1}{n^{1-c}}$, and let $\eta = \eps \cdot c/3$.
By a result of Feige and Kilian~\cite{FeigeKilian}, it is $\NP$-hard to decide whether a
given $n$-vertex graph $G = (V,E)$ satisfies $\chi(G)\leq n^{\eta}$ or $\alpha(G) \leq n^{\eta}$.

Let $\tilde{G} = G_{p,v}$ be the $p$-vertex percolated subgraph of $G$,
and let $m$ be the number of vertices in $\tilde{G}$.
By concentration bounds, we have $|m - pn| < 0.1 pn$ with high probability,
and we shall assume from now on that this is indeed the case.
By the choice of the parameters this implies $n^{\eta}  < m^{\eps/2}$
Therefore, if $\chi(G) \leq n^{\eta}$, then $\chi(\tilde{G}) \leq n^{\eta} < m^{\eps}$.
On the other hand, if $\alpha(G) \leq n^{\eta}$, then $\alpha(\tilde{G}) < n^{\eta} < m^{\eps}$,
and the proof follows.
\end{proof}

\subsection{Vertex Cover}

An vertex cover in a graph $G = (V, E)$ is a set of vertices $S \seq V$ such that
every edge $e \in E$ is incident to at least one vertex in $S$.
that spans no edge. In the Minimum Vertex Cover problem we are given a graph $G$
and our goal is to find a vertex cover of $G$ of minimal size.

Note that for an $n$-vertex graph $G$ it holds that $G$ contains a vertex cover of size $k$
if and only if it contains an independent set of size $n-k$.

There is a simple factor 2 approximation algorithm
for the Minimum Vertex Cover problem~\cite{Vazirani}.
On the hardness side, the problem is
$\NP$-hard to approximate within a factor of 1.3606~\cite{DinurSafra05},
and assuming the Unique Games Conjecture is known to be
$\NP$-hard to approximate within a factor of $(2-\eps)$
for any constant $\eps>0$ ~\cite{Khot}.
We prove that the same hardness result hold also
when instead of worst-case instances one considers

We will need the following definition.
\begin{definition}
	Given a graph $G$, the $R$-blowup of $G$ is a graph $G'=(V',E')$,
	where every vertex $v$ is replaced by an independent set $\widetilde{v}$ of size $R$,
	which we also call the cloud corresponding to $v$.
	If $(u,v) \in E$, then $\widetilde{u}$ and $\widetilde{v}$
	are connected by a complete $R \times R$ bipartite graph.
\end{definition}

\paragraph{Edge percolation}

We have the following simple lemma regarding independent
sets in edge percolated subgraph of $K_{R,R}$.

\begin{lemma}\label{lemma:edge_perc K_RR}
Consider the complete bipartite graph $K_{R,R}$ with bipartition $A,B$,
and let $G_{p,e}$ be the edge percolation of $K_{R,R}$ with probability $p$.
Then, the probability that there is an independent set $I$ in $G_{p,e}$
such that $|I \cap A| = |I \cap B| = C \log(R)/p$ is at most $R^{-3}$,
where $C$ is a large enough constant independent of $n$ or $p$.
\end{lemma}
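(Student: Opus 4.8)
The plan is to bound the probability by a union bound over all candidate pairs $(I_A, I_B)$ with $I_A \subseteq A$, $I_B \subseteq B$, $|I_A| = |I_B| = s$, where $s = C\log(R)/p$. Fix such a pair. The set $I_A \cup I_B$ is independent in $G_{p,e}$ precisely when none of the $s^2$ edges of the complete bipartite graph between $I_A$ and $I_B$ survives percolation; since edges survive independently with probability $p$, this happens with probability exactly $(1-p)^{s^2} \le e^{-p s^2}$. With $s = C\log(R)/p$ we get $p s^2 = C^2 \log^2(R)/p \ge C^2 \log^2(R)$ (using $p \le 1$), so each pair contributes at most $e^{-C^2\log^2 R} = R^{-C^2 \log R}$.

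Next I would count the pairs: there are $\binom{R}{s}^2 \le (eR/s)^{2s} \le R^{2s}$ of them, and $2s = 2C\log(R)/p \le 2C\log^2(R)$ is at most a fixed power of $R$ — in fact $R^{2s} = e^{2s\log R} = e^{2C\log^2(R)/p}$. Comparing exponents, the union bound gives probability at most
\[
  R^{2s} \cdot e^{-ps^2} \le \exp\!\left(\frac{2C\log^2 R}{p} - \frac{C^2\log^2 R}{p}\right) = \exp\!\left(-\frac{C(C-2)\log^2 R}{p}\right),
\]
and since $p \le 1$ this is at most $\exp(-C(C-2)\log^2 R)$, which is well below $R^{-3}$ once $C$ is a sufficiently large absolute constant (e.g. $C \ge 5$), for all $R$ large enough; small $R$ can be absorbed by enlarging $C$ or is vacuous. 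This establishes the claim.

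The main thing to be careful about — rather than a genuine obstacle — is making sure the number of subsets does not overwhelm the per-pair failure probability when $p$ is very small: both the exponent of the union-bound count ($\approx s\log R = C\log^2(R)/p$) and the exponent of the survival bound ($\approx ps^2 = C^2\log^2(R)/p$) scale like $1/p$, so the $1/p$ factors cancel and the comparison reduces to $2C$ versus $C^2$, which is why a large enough constant $C$ suffices uniformly in $p$. One should also note the lemma is stated with $R^{-3}$ but the constant $C$ is allowed to depend on nothing, so it is cleanest to just verify $C(C-2) \ge 3$ plus a little slack to dominate the lower-order terms hidden in $\binom{R}{s} \le (eR/s)^s$ (the $\log(eR/s)$ versus $\log R$ discrepancy is negligible). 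The rest is routine arithmetic with the Chernoff-free, exact binomial computation.
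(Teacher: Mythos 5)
Your argument is correct and is essentially the same as the paper's: both fix a pair of $s$-subsets, use $(1-p)^{s^2}\le e^{-ps^2}$ for the no-edge probability, take a union bound with $\binom{R}{s}^2\le R^{2s}$, and observe that the $1/p$ factors cancel so the exponent comparison reduces to $C^2$ versus $2C$. The paper states the same chain of inequalities (with a small typo, $m$ for $R$) and concludes directly; your write-up simply makes the cancellation explicit.
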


\begin{proof}
For fixed sets $S_A \seq A$ and $S_B \seq B$ each of size $C \log(R)/p$
the probability that $S_A$ and $S_B$ span no edge is $(1-p)^{(C \log(R)/p)^2}$.
Therefore, by union bound over all $S_A$ and $S_B$ the probability
that there is is an independent set $I$ in $G_{p,e}$
with $|I \cap A| = |I \cap B| = C \log(R)/p$ is at most
\[
	{R \choose C \log(R)/p}^2 (1-p)^{(C \log(R)/p)^2} \leq m^{2C \log(R)/p}e^{-p(C \log(R)/p)^2}
\]
which is at most $R^{-3}$ for large enough $C$.
\end{proof}

Consider the following $\GapVC(c,s)$ problem
where the YES-instances are graphs that have a vertex cover of size $c n$,
and NO-instances are all graphs whose minimal vertex cover is larger than $s n$,
where $n$ is the number of vertices in $G$.
Note that, equivalently,
the YES-instances are graphs that contain an independent set of size $\alpha(G) \geq (1-c)n$,
the NO-instances are graphs whose maximal independent set is of size $\alpha(G) \leq (1-s)n$.

We remark that the result of Khot and Regev~\cite{Khot} proves that assuming the Unique Games
Conjecture the problem $\GapVC(1-\eps,1/2+\eps)$ if $\NP$-hard for all constant $\eps >0$.
We use this fact in order to hardness of approximation for this problem on percolated instances.

\begin{theorem}\label{thm:VC edge perc}
Let $\eps,\delta \in (0,1)$ be fixed constants.
Assuming the Unique Games Conjecture, $\GapVC(1-\eps,1/2+\eps)$
is $\NP$-hard under a robust reduction with respect to edge percolation with parameter $p$
for any $p > \frac{1}{n^{1-\delta}}$, where $n$ denotes the number of vertices in the given graph.

In particular, assuming the Unique Games Conjecture $(2-\eps)$-approximation of the
Vertex Cover problem is hard on edge percolated instances.
\end{theorem}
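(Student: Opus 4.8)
The plan is to give a $Perc$-robust reduction from the worst-case $\GapVC(1-\eps',1/2+\eps')$ problem (which is $\NP$-hard under the Unique Games Conjecture by Khot--Regev~\cite{Khot}) to the edge-percolated version of $\GapVC(1-\eps,1/2+\eps)$, via the $R$-blowup operation for a suitably chosen $R = R(n,p)$. Concretely, given an $N$-vertex graph $G$ coming from the Khot--Regev hardness instance, I would output $G' = $ the $R$-blowup of $G$, so $G'$ has $n := NR$ vertices, and then percolate each edge of $G'$ independently with probability $p > 1/n^{1-\delta}$. The choice $R := n^{\delta/2}$ (equivalently solving $NR = R^{1 + \Theta(1)}$ so that $R$ is polynomially large in $N$) makes $\log R = \Theta(\log n)$ and ensures $C\log(R)/p \ll R$, which is exactly what Lemma~\ref{lemma:edge_perc K_RR} needs.

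The first step is the YES-case. If $G$ has a vertex cover of size $(1-\eps')N$, i.e.\ an independent set $T$ of size $\eps' N$, then taking the union of the clouds over $T$ gives an independent set of size $\eps' N R = \eps' n$ in $G'$, hence also in any subgraph $G'_{p,e}$; so $G'_{p,e}$ has a vertex cover of size $(1-\eps')n$, and with probability $1$ it is a YES-instance of $\GapVC(1-\eps,1/2+\eps)$ once $\eps \le \eps'$. The second and main step is the NO-case: if $G$ has no independent set of size $(1/2+\eps')N$ — equivalently every vertex cover has size $> (1/2-\eps')N$ — I must show that with high probability $\alpha(G'_{p,e}) \le (1/2-\eps)n$. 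Here is where Lemma~\ref{lemma:edge_perc K_RR} does the work: for every edge $(u,v)$ of $G$, conditioned on the percolation restricted to the bipartite graph between clouds $\widetilde u$ and $\widetilde v$, the lemma says that with probability $\ge 1 - R^{-3}$ there is no independent set hitting both $\widetilde u$ and $\widetilde v$ in more than $C\log(R)/p$ vertices each. Taking a union bound over the $\le N^2 \le R^{O(1)}$ edges of $G$, with high probability every independent set $I$ of $G'_{p,e}$ has the property that the set of vertices $v$ of $G$ with $|I \cap \widetilde v| > C\log(R)/p$ forms an independent set in $G$ — call it $T_{\text{big}}$, so $|T_{\text{big}}| \le (1/2+\eps')N$ — while every other cloud contributes at most $C\log(R)/p$ vertices to $I$. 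Therefore $|I| \le (1/2+\eps')NR + N \cdot C\log(R)/p = (1/2+\eps')n + o(n)$, which is at most $(1/2+\eps)n$ for $n$ large, giving the NO-case (after renaming $\eps$ slightly); equivalently $G'_{p,e}$ has minimum vertex cover $> (1/2-\eps)n$.

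The third step is bookkeeping: the reduction is polynomial time since $R = \poly(n)$, the gap in the output $\GapVC$ instance is $(1-\eps)$ versus $(1/2+\eps)$ which corresponds to a $(2-O(\eps))$ inapproximability factor for Minimum Vertex Cover, and by Proposition~\ref{prop:BPP RP harndess} the $Perc$-robust reduction rules out a polynomial-time $(2-\eps)$-approximation on percolated instances unless $\NP \subseteq \BPP$ — in fact, since the YES-case survives with probability exactly $1$, it is a robust $\RP$-reduction. The main obstacle is the NO-case argument, and the subtle point there is that Lemma~\ref{lemma:edge_perc K_RR} only controls, for a fixed pair of clouds, independent sets that are \emph{large in both} clouds; one must argue that an independent set of $G'_{p,e}$ that is small in one cloud but large in another is automatically accounted for by the ``$+ N\cdot C\log(R)/p$'' slack term, and that the collection of clouds in which $I$ is large genuinely pulls back to an independent set of $G$ (which is immediate since two adjacent clouds being both ``$I$-large'' is exactly the bad event the lemma forbids). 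One also has to verify that the deviation term $N\cdot C\log(R)/p$ is $o(n) = o(NR)$, i.e.\ $C\log(R)/p = o(R)$, which holds precisely because $p > n^{-(1-\delta)}$ forces $1/p = n^{1-\delta} \ll n^{\delta/2} = R$ for $\delta < 2/3$ (and the general case reduces to this by the usual trick of padding/choosing $R$ so that $1/p \ll R$, as $p \ge n^{-1+\delta}$ always gives $1/p \le n \le R^{O(1/\delta)}$ — one simply takes $R$ a large enough fixed power of $n$).
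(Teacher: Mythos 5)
Your overall approach is the same as the paper's: reduce from Khot--Regev, perform an $R$-blowup, invoke Lemma~\ref{lemma:edge_perc K_RR} and a union bound over the $\le N^2$ cloud-pairs to bound $\alpha$ of the percolated blowup, and pick $R$ so that the additive slack $N \cdot C\log(R)/p$ is $o(n)$. The decomposition of an independent set into ``big'' clouds (which must pull back to an independent set of $G$) and ``small'' clouds (each contributing at most $C\log(R)/p$) is exactly the right idea and mirrors the paper's inequality $\alpha(G)\,R \le \alpha(\widetilde H) \le \alpha(G)\,R + (C\log(R)/p)\,N$.

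However, there is a genuine gap in the bookkeeping of the YES/NO cases, and as written the reduction produces no gap. You take the input YES-case to be ``$G$ has a VC of size $(1-\eps')N$'', i.e.\ $\alpha(G)\ge \eps'N$, and the input NO-case to be ``$G$ has no IS of size $(1/2+\eps')N$'', i.e.\ $\alpha(G) < (1/2+\eps')N$. But the Khot--Regev hardness gives the \emph{opposite} split: hard instances either have $\alpha(G) > (1/2-\eps')N$ (so VC $\le (1/2+\eps')N$, this is the YES-case) or $\alpha(G) < \eps'N$ (so VC $>(1-\eps')N$, this is the NO-case). With your thresholds, the YES-case output has VC $\le(1-\eps')n$ while the NO-case output has min-VC $>(1/2-\eps')n - o(n)$; since $(1/2-\eps')n < (1-\eps')n$, the two sides \emph{overlap} and no inapproximability is obtained. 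The fix is precisely the paper's setup: start from $\alpha(G)>(1/2-\eps')N$ (giving $\alpha(G'_{p,e})\ge(1/2-\eps')n$, VC $\le(1/2+\eps')n$) versus $\alpha(G)<\eps'N$ (giving $\alpha(G'_{p,e})<\eps'n + o(n)$, VC $>(1-2\eps')n$), which yields the $2-O(\eps')$ output gap. (This confusion likely stems from the paper's own use of $\GapVC(1-\eps,1/2+\eps)$, whose arguments appear reversed relative to its stated convention; but the paper's proof, which works directly with $\alpha$, uses the correct thresholds.) Secondarily, your parameter check for $R=n^{\delta/2}$ requires $\delta>2/3$, not $\delta<2/3$, though you do flag that $R$ must be chosen larger in general; the paper avoids this by setting $R=N^{2/c}$ with $c=\log(pn)/\log(n)$, which works for all $\delta\in(0,1)$.
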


\begin{proof}
By~\cite{Khot} assuming the Unique Games Conjecture, for any $\eps > 0$ the
problem $\GapVC(1-\eps,1/2+\eps)$ is $\NP$-hard. Equivalently, given an $N$-vertex
graph $G$ is is $\NP$-hard to distinguish between the case that
$\alpha(G) > (1/2-\eps)N$ and the case that $\alpha(G) < \eps N$.
We show a reduction from this problem to itself
(with slightly larger parameter $\eps$) that is robust for edge percolation.

Consider the reduction that given a graph $G$ outputs
the $R$-blowup of $G$, which we denote by $H$, with $R > n$ to be chosen later.
That is the graph $H$ is a graph on $n = NR$ vertices, and it is clear that $\alpha(H) = \alpha(G) \cdot R$.
Therefore, this is indeed a reduction from the $\GapVC(1-\eps,1/2+\eps)$ to itself.
We show below that in fact the reduction is robust for edge percolation.
In order to do it we prove that with high probability
\begin{equation}\label{eq:VC edge perc}
	\alpha(G) \cdot R \leq \alpha(\widetilde{H}) \leq \alpha(G) \cdot R+(C\log(R)/p) \cdot N,
\end{equation}
where $\widetilde{H} = H_{p,e}$ denotes the edge percolation of $H$ with parameter $p$.
Indeed, the left inequality is clear because $\alpha(\widetilde{H}) \geq \alpha(H) = \alpha(G) \cdot R$,
since $\widetilde{H}$ is a subgraph of $H$.

For the right inequality, by Lemma~\ref{lemma:edge_perc K_RR}
with probability at least $(1 - N^2/R^3)$ the following holds:
for every edge $(u,v)$ of $G$ the corresponding clouds $\widetilde{u}$ and $\widetilde{v}$ in $H$
are such that there is no independent set $I$ in $\widetilde{H}$,
such that $|I \cap \widetilde{u}| \geq C \log(R)/p$
and $|I \cap \widetilde{v}| \geq C \log(R)/p$.
Therefore, if $I$ is an independent set that intersects
some clouds on more than $C\log(R)/p$, then the vertices corresponding
the these clouds must form an independent set in $G$.
Thus, with probability at least $(1 - N^2/R^3)$ we have
$\alpha(\widetilde{H})\leq \alpha(G) \cdot R+(C\log(R)/p) \cdot N$.

Next we choose the parameter $R$ such that the reduction above
is indeed a robust reduction for edge percolation with parameter $p$.
For the parameter $p$ let $c = \frac{\log(pn)}{\log(n)}$,
and let $R = N^{2/c}$ (where $N$ is the number of vertices in the original graph).

Now, if $\alpha(G) > (1/2-\eps) N$, then by~\eqref{eq:VC edge perc} we have
$\alpha(\widetilde{H}) \geq \alpha(G) \cdot R > (1/2-\eps) NR = (1/2-\eps) n$,
and hence $\widetilde{H}$ contains a vertex cover of size $(1/2+\eps) n$
On the other hand, we claim that if $\alpha(G) < \eps N$, then with high probability
$\alpha(\widetilde{H})  < 2\eps n$.
Indeed, by the choice of $R$ we have
$p = \frac{1}{n^{1-c}}	=  \frac{1}{(NR)^{1-c}}	>  \frac{N}{\alpha(G)} \cdot \frac{C \log(R)}{R}$.
Therefore, by the right inequality of~\eqref{eq:VC edge perc} we have
$\alpha(\widetilde{H}) \leq \alpha(G) \cdot R + (C\log(R)/p) \cdot N \leq 2 \alpha(G) \cdot R < 2\eps n$,
and hence $\widetilde{H}$ does not have a vertex cover of size $(1-\eps) n$.
This completes the proof of Theorem~\ref{thm:VC edge perc}.
\end{proof}

\paragraph{Vertex percolation}
We proceed with vertex percolation.
Note that when considering vertex percolation,
the percolation parameter $p$ depends on the
number of vertices in the given (worst-case instance) graph,
while the performance of the algorithm is measured with
respect to the number of vertices in the percolated graph,
which is close to $pn$ with high probability

\begin{theorem}\label{thm:VC vertex perc}
Let $\eps,\delta \in (0,1)$ be fixed constants.
Assuming the Unique Games Conjecture, $\GapVC(1-\eps,1/2+\eps)$
is $\NP$-hard under a robust reduction with respect to vertex percolation with parameter $p$,
for any $p > \frac{1}{n^{1-\delta}}$, where $n$ is the number of vertices in the given graph.

In particular, assuming the Unique Games Conjecture $(2-\eps)$-approximation of the
Vertex Cover problem is hard on vertex percolated instances.
\end{theorem}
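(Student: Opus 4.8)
The plan is to mirror the edge-percolation argument of Theorem~\ref{thm:VC edge perc}, exploiting the fact that vertex percolation behaves even more nicely on a blowup than edge percolation does: deleting vertices never destroys edges among the surviving vertices, so a vertex-percolated blowup is again a blowup, only with random cloud sizes. First I would invoke the Khot--Regev theorem~\cite{Khot}: assuming the Unique Games Conjecture, for every constant $\eps>0$ it is $\NP$-hard, given an $N$-vertex graph $G$, to distinguish $\alpha(G)>(1/2-\eps)N$ from $\alpha(G)<\eps N$. The reduction sends $G$ to its $R$-blowup $H$, which has $n=NR$ vertices, with $R$ a fixed polynomial in $N$ — say $R=N^{2/\delta}$ — chosen large enough that $pR=\omega(\log n)$ for every admissible $p>n^{-(1-\delta)}=(NR)^{-(1-\delta)}$ (indeed $pR>R^{\delta}/N^{1-\delta}=N^{1+\delta}$). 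Since $\alpha(H)=R\cdot\alpha(G)$, this is already a deterministically correct reduction from $\GapVC$ to itself.

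The key structural step is to observe that $\widetilde H:=H_{p,v}$ is distributed exactly as the graph obtained from $G$ by replacing each vertex $v$ with an independent set (``cloud'') of random size $r_v\sim\Bin(R,p)$, the cloud of $v$ being completely joined to the cloud of $u$ whenever $(u,v)\in E(G)$. Because adjacent clouds are completely joined — and vertex percolation preserves this — every independent set of $\widetilde H$ is contained in the union of the surviving clouds over an independent set of $G$; hence
\[
  \alpha(\widetilde H)=\max\Bigl\{\textstyle\sum_{v\in I}r_v\ :\ I\text{ independent in }G\Bigr\}.
\]
Applying Corollary~\ref{cor:concentration for m sequences} with its ``$n$'' set to $R$ and its ``$m$'' set to $N$, with probability $1-o(1)$ every cloud satisfies $|r_v-pR|\le\sqrt{C\,pR\log N}$, so that $r_v=(1\pm\gamma)pR$ for all $v$ with $\gamma=O(\sqrt{\log N/(pR)})=o(1)$, and simultaneously $m:=|V(\widetilde H)|=\sum_v r_v=(1\pm\gamma)pNR$. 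Feeding this into the displayed formula gives $\alpha(\widetilde H)=(1\pm\gamma')\,pR\,\alpha(G)$, and therefore $\alpha(\widetilde H)/m=(1\pm\gamma'')\cdot\alpha(G)/N$ with $\gamma',\gamma''=o(1)$.

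The two cases now follow as in Theorem~\ref{thm:VC edge perc}, but measured against $m$: if $\alpha(G)>(1/2-\eps)N$ then with high probability $\alpha(\widetilde H)>(1/2-2\eps)m$, so $\widetilde H$ has a vertex cover of size $<(1/2+2\eps)m$; if $\alpha(G)<\eps N$ then with high probability $\alpha(\widetilde H)<2\eps m$, so every vertex cover of $\widetilde H$ has size $>(1-2\eps)m$. Together with the deterministic correctness noted above, this is a $Perc$-robust reduction from $\GapVC$ to $\GapVC$ (with $\eps$ replaced by $2\eps$), proving the first assertion. The ``in particular'' statement then follows from Proposition~\ref{prop:BPP RP harndess}: a polynomial-time $(2-\eps)$-approximation for Vertex Cover on vertex-percolated instances would with high probability separate these two cases and hence decide the gap problem, which is $\NP$-hard under the Unique Games Conjecture — so no such algorithm exists unless $\NP\subseteq\BPP$.

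I expect the only genuine subtlety to be the concentration bookkeeping: one must pick $R$ as a large enough fixed polynomial in $N$ that $pR$ dominates $\log n$ uniformly over all $p>n^{-(1-\delta)}$, guaranteeing $\gamma=o(1)$, and then track the $(1\pm\gamma)$ factors through the passage from the worst-case size $N$ (and blowup size $n$) to the percolated size $m$. Everything else is routine, and — in contrast to the edge-percolation proof — no analogue of Lemma~\ref{lemma:edge_perc K_RR} is needed, because vertex percolation cannot produce an independent set straddling two adjacent clouds.
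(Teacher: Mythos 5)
Your proposal is correct and follows essentially the same route as the paper: apply the Khot--Regev hardness of $\GapVC$, blow up $G$ by a factor $R$, observe that vertex percolation of a blowup is again a blowup with binomial cloud sizes (so independent sets still decode to independent sets of $G$), and use Corollary~\ref{cor:concentration for m sequences} to pin both every cloud size and the total size $m$ to within $(1\pm o(1))$ of their means. The only cosmetic difference is the choice of $R$: the paper sets $R=(N/\eps^2)^{1/c}$ with $c=\log(pn)/\log n$, so $R$ is tuned to the particular $p$, while you fix $R=N^{2/\delta}$ uniformly over all admissible $p$ and verify $pR>N^{1+\delta}$, which avoids the mild circularity of choosing $R$ as a function of $n=NR$. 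Both choices yield the same concentration and the same $(1\pm\gamma)$ bookkeeping, and both settle the gap with an $O(\eps)$ loss.
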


\begin{proof}
The reduction is the same as in the proof of Theorem~\ref{thm:VC edge perc}.
For the parameters $p$ and $\eps$ let $c = \frac{\log(pn)}{\log(n)}$
so that $p = \frac{1}{n^{1-c}}$, and let $R = (\frac{N}{\eps^2})^{1/c}$.
Given a graph $G$ the reduction produces the $R$-blowup of $G$,
which we denote by $H$. That is the graph $H$ is a graph on $n = NR$ vertices.

Let $\widetilde{H} = H_{p,e}$ denote the vertex percolation of $H$ with parameter $p$.
By Corollary~\ref{cor:concentration for m sequences}, with high probability
the number of vertices in $\widetilde{H} = H_{p,e}$, which we denote by
$m$ is between $pNR - C\sqrt{pNR \log(NR)}$ and $pNR + C\sqrt{pNR \log(NR)}$,
and the number of vertices in every cloud of $\widetilde{H}$ is between
$pR-C\sqrt{pR\log N}$ and $pR+C\sqrt{pR\log N}$,
for some absolute constant $C>0$ independent of $N$ or $p$.

Clearly any independent set $I$ in $\widetilde{H}$ gives rise to an independent set in $G$
by taking all vertices $v$ of $G$ such that $I$ intersects the corresponding cloud $\widetilde{v}$.
This implies that with high probability it holds (for $N$ large enough) that
\[
	\alpha(G) \cdot (pR - C \sqrt{pR \log(N)}) \leq \alpha(\widetilde{H}) \leq \alpha(G) \cdot (pR + C \sqrt{pR \log(N)}).
\]
By the choice of $R$ we have $R >\frac{C^2 \lg(N)}{\eps^2 p}$, and hence
$|\alpha(\widetilde{H}) - \alpha(G) pR| \leq  \eps \cdot \alpha(G) pR$.
Therefore, denoting by $m$ the number of vertices in $\widetilde{H}$
if $\alpha(G) > (1/2-\eps) N$, then
$\alpha(\widetilde{H}) \geq (1/2-3\eps) m$,
and hence $\widetilde{H}$ contains a vertex cover of size $(1/2+3\eps) m$.
On the other hand, if $\alpha(G) < \eps N$, then with high probability
$\alpha(\widetilde{H})  < 3\eps m$, and
and hence $\widetilde{H}$ does not have a vertex cover of size $(1-3\eps) m$.
\end{proof}

\section{Hamiltonicity and Percolation}

Recall that an Hamiltonian cycle in a graph is a cycle that visits every vertex
exactly once. Deciding if a graph (whether directed or undirected)
contains a Hamiltonian cycle is a classical $\NP$-hard problem, which we denote by $\HamCycle$. A hamiltonian path, is a simple path that traverses all vertices in the graph.


In this section we prove unless $\NP=\coRP$,
there is no polynomial time algorithm that given a $n$-vertex directed graph $G$
decides with high probability whether $G_{p,e}$ contains a Hamiltonian cycle
for any $p>\frac{1}{n^{1-\epsilon}}$ where $\epsilon \in (0,1)$.

A natural approach in proving that deciding the Hamiltonicity of percolated instances is hard, is to ``blow up'' edges. Namely to replace each edge $(u,v)$ by a clique of size $k$ and connect both endpoints of the edges to all vertices of the clique. The idea is that when $k$ is large enough, there is a Hamiltonian path with high probability between all pairs of distinct vertices of the clique. Hence with high probability, we can connect $u$ and $v$ after percolating the edges, by a path that traverses all the vertices of the percolated clique. The problem with this idea, is that the resulting graph after this blowup operation may not be Hamiltonian as there is a new set of vertices for every ``edge" in the original graph that needs to be traversed by an Hamiltonian cycle. For \emph{directed} graphs, we overcome this problem by adding to each vertex $v$ a large clique $C$, adding a directed edge $(v,c)$ for every $c \in C$ and and adding a directed edge $(c,u)$ for every $c \in C$ and $u \in N(v)$ (where $N(v)$ is the set of all vertices having a directed edge from $v$).

\begin{theorem}\label{thm:ham cycle edge perc}
Let $\eps \in (0,1)$ be a fixed constant.
Then, unless $\NP=\RP$,
there is no polynomial time algorithm that when given a
directed graph $G = (V,E)$ with $n$ vertices
decides with high probability whether $G_{p,e}$
contains a Hamiltonian cycle for any $p > \frac{1}{n^{1-\eps}}$.
\end{theorem}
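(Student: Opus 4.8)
The plan is to exhibit a polynomial-time reduction from directed $\HamCycle$ to itself that is robust with respect to edge percolation at \emph{every} rate $p=p(n)>\frac{1}{n^{1-\eps}}$, and that is moreover an $\RP$-reduction; the theorem then follows from Proposition~\ref{prop:BPP RP harndess}. Given a directed graph $G_0=(V_0,E_0)$ on $n_0$ vertices, fix $k=n_0^{t}$ with $t=t(\eps)$ a sufficiently large constant (e.g. $t=\lceil 3/\eps\rceil$), and build $G'$ as follows: keep the vertices $V_0$; attach to each $v\in V_0$ a fresh set $C_v$ of $k$ vertices spanning a complete digraph (both orientations of every pair); add all edges $v\to c$ for $c\in C_v$; and for every $v$ and every out-neighbour $u$ of $v$ in $G_0$, add all edges $c\to u$ for $c\in C_v$. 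The original edges of $G_0$ are \emph{discarded}. Then $G'$ has $N:=n_0(k+1)=n_0^{\Theta(1/\eps)}$ vertices, so the reduction runs in polynomial time.

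First I would establish the equivalence ``$G_0$ is Hamiltonian $\iff$ $G'$ is Hamiltonian'' by a trap argument. The only out-edges of a vertex $v\in V_0$ lead into $C_v$, and the only edges entering $C_v$ from outside $C_v$ emanate from $v$; hence in any Hamilton cycle of $G'$, upon visiting $v$ the cycle must step into $C_v$, and once it leaves $C_v$ it cannot return (the unique external entry to $C_v$, namely $v$'s cycle out-edge, is already used). Consequently the cycle traverses all of $C_v$ in one contiguous block and then exits to some out-neighbour $u$ of $v$; reading off the order in which $V_0$ is visited thus yields a Hamilton cycle of $G_0$. Conversely, a Hamilton cycle $v_1\to\cdots\to v_{n_0}\to v_1$ of $G_0$ lifts to $G'$ by replacing each step $v_i\to v_{i+1}$ by $v_i\to c_1^{(i)}\to\cdots\to c_k^{(i)}\to v_{i+1}$, where $c_1^{(i)},\dots,c_k^{(i)}$ enumerate $C_{v_i}$. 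This also settles the NO-direction of robustness: if $G_0\notin\HamCycle$ then $G'\notin\HamCycle$, and since deleting edges cannot create a Hamilton cycle, $G'_{p,e}\notin\HamCycle$ with probability $1$; so the reduction is an $\RP$-reduction.

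The substance is the YES-direction: if $G_0$ is Hamiltonian then $G'_{p,e}$ is Hamiltonian with probability $1-o(1)$, for every $p>N^{-(1-\eps)}$. Fix a Hamilton cycle $v_1\to\cdots\to v_{n_0}\to v_1$ of $G_0$. For $k=n_0^{t}$ with $t$ large enough (depending only on $\eps$) one checks that, for every admissible $p$, $pk\ge N^{\Omega(1)}=\omega(\log N)$ and $p$ exceeds the Hamilton-connectivity threshold $\Theta(\log k/k)$ of a $k$-vertex random digraph by a polynomial factor. Then three events each hold with probability $1-o(1)$, by Chernoff bounds and a union bound over the $\le N^2$ relevant objects: (i) every $v\in V_0$ retains at least $pk/2\ge 2$ edges $v\to C_v$; (ii) every pair $(v,u)$ with $u\in N^+_{G_0}(v)$ retains at least $pk/2\ge 2$ edges $C_v\to u$; and (iii) for every $v$, the percolated clique $(C_v)_{p,e}$, which is distributed as a random digraph $D(k,p)$, is Hamilton-connected, i.e. contains a directed Hamilton path between every ordered pair of its vertices (here I would invoke the known Hamiltonicity/Hamilton-connectivity theorems for random digraphs, e.g. from~\cite{FriezeM97,Random_Graphs}). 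On the intersection of these events, for each $i$ pick a surviving edge $v_i\to a_i$ and a surviving edge $b_i\to v_{i+1}$ with $a_i\neq b_i$, together with a Hamilton path $P_i$ of $(C_{v_i})_{p,e}$ from $a_i$ to $b_i$; concatenating these around the cycle $v_1,\dots,v_{n_0}$ produces a Hamilton cycle of $G'_{p,e}$. Hence the reduction is a $Perc$-robust $\RP$-reduction, and Proposition~\ref{prop:BPP RP harndess} yields that no polynomial-time algorithm decides with high probability whether $G_{p,e}\in\HamCycle$ unless $\NP=\RP$.

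The only non-routine ingredient — and the step I expect to require the most care — is item (iii): I need the failure probability of Hamilton-connectivity of $D(k,p)$ to be small enough (say $o(1/n_0)$) to survive the union bound over the $n_0$ cliques. This is exactly why $k$ is taken to be a large polynomial in $n_0$ depending on $\eps$: it pushes $p=pk/k$ a polynomial factor above $\log k/k$ for every $p>N^{-(1-\eps)}$, placing each $(C_v)_{p,e}$ well inside the regime where random-digraph Hamiltonicity results give a super-polynomially small failure probability. Everything else — the trap argument for the equivalence, the Chernoff/union-bound bookkeeping in (i)–(ii), and the concatenation of the paths — is mechanical.
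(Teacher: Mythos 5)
Your reduction and argument are essentially the same as the paper's: attach a complete digraph $U_i$ to each vertex $i$, route all of $i$'s out-edges through $U_i$ (discarding the original edges), use a trap argument for the exact equivalence, and in the YES-case stitch together Hamilton paths through each percolated clique with $R=N^{\Theta(1/\eps)}$ chosen so that the per-clique failure probability times $N$ is $o(1)$. The only place your sketch is lighter than the paper's is your item (iii): the paper does not simply cite a quantitative Hamilton-connectivity theorem for $D(k,p)$ with super-polynomially small error, but instead invokes a qualitative ``with high probability'' result on Hamilton-cycle packing in random digraphs and then boosts it to a $1/R^3$ failure bound by observing that $H_{tp_0}$ stochastically dominates the union of $t=3\log R$ independent copies of $H_{p_0}$; if you cannot source an explicit quantitative theorem, this same repetition trick closes the gap.
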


We will need the following claim.
\begin{claim}\label{claim:ham path gadget}
	Let $H = (V,E)$ be the directed graph
	with a source $s$ a sink $t$, and $R$ vertices $U = \{u_1,\dots,u_k\}$.
	The edges of $H$ are
	\[
		E = \{(s \to u_i) : i \in [R]\}
            \cup \{(u_i \to t) : i \in [R]\}
            \cup \{(u_i \to u_j) : i,j \in [R]\}.
	\]
	Let $H'=(V,E')$ be an edge percolation of $H$,
    where we keep each directed edge with probability $p = \frac{3\log^5(R)}{R}$.
	Then, with probability $1 - \frac{1}{R^3}$ there is a Hamiltonian path in $H$ from $s$ to $t$.
\end{claim}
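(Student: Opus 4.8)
The percolation acts independently on three disjoint sets of arcs of $H$: the out-arcs of $s$, namely $\{s\to u_i\}$; the in-arcs of $t$, namely $\{u_i\to t\}$; and the internal arcs $\{u_i\to u_j\}$ among $U=\{u_1,\dots,u_R\}$. I would first reveal the first two sets. Put $A=\{i\in[R]:(s\to u_i)\in E'\}$ and $B=\{j\in[R]:(u_j\to t)\in E'\}$. Each of $|A|,|B|$ is a sum of $R$ independent $\mathrm{Bernoulli}(p)$ variables with mean $pR=3\log^5 R$, so by Corollary~\ref{cor:concentration for m sequences} (applied with relative deviation $1/2$) we have $|A|,|B|\ge \tfrac12 pR\ge \log^5 R$ except with probability far below $R^{-3}$; in particular $A,B\ne\emptyset$. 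Hence it suffices to show that, except with probability at most $\tfrac12 R^{-3}$, the percolated internal digraph $D:=H'[U]$ has a directed Hamilton path whose first vertex is some $u_a$ with $a\in A$ and whose last vertex is some $u_b$ with $b\in B$: prepending the arc $s\to u_a$ and appending the arc $u_b\to t$ then yields the required $s$--$t$ Hamilton path of $H'$. Note that $D$ is exactly the random digraph in which each of the $R(R-1)$ possible arcs is kept independently with probability $p$, and $D$ is independent of the already-revealed sets $A,B$.

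It remains to analyse Hamiltonicity of this random digraph in a regime well above the threshold. It is convenient to add a single auxiliary vertex $\zeta$ with out-neighbourhood $\{u_i:i\in A\}$ and in-neighbourhood $\{u_j:j\in B\}$; a directed Hamilton cycle of the $(R+1)$-vertex digraph $D^{+}:=D\cup\{\zeta\}$ is precisely a directed Hamilton path of $D$ from some $u_a$ ($a\in A$) to some $u_b$ ($b\in B$), so the task reduces to showing that $D^{+}$ is Hamiltonian with failure probability at most $\tfrac12 R^{-3}$. Now $D^{+}$ is the random digraph $D(R,p)$ together with one extra vertex of in- and out-degree $\ge \log^5 R$, and $p=\tfrac{3\log^5 R}{R}$ exceeds the directed Hamiltonicity threshold $\tfrac{\log R+\omega(1)}{R}$ by a polylogarithmic factor. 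It is a classical fact that a random digraph this far above its threshold is Hamiltonian with superpolynomially small failure probability, and robustly so under bounded perturbations of this kind (see, e.g.,~\cite{Random_Graphs} and the literature on Hamilton cycles in random digraphs); in particular the failure probability is well below $\tfrac12 R^{-3}$, which completes the argument modulo this classical input.

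For a self-contained treatment I would prove the needed Hamiltonicity statement by the cycle-cover-and-merge method. Form the bipartite graph $\Gamma$ with parts $[R]^{+}$ and $[R]^{-}$, joining $i^{+}$ to $j^{-}$ exactly when the arc $i\to j$ survives (and incorporating $\zeta$ with degree $\ge\log^5 R$ on each side); by a Chernoff bound and a union bound over the $O(R)$ vertices, $\Gamma$ has minimum degree $\ge \tfrac12 pR\ge \log^5 R$ except with probability $\le 2R\,e^{-\Omega(\log^5 R)}\ll R^{-3}$. A perfect matching of $\Gamma$ is a partition of the vertex set into vertex-disjoint directed cycles, and one then repeatedly merges two cycles of the current cover by an augmenting ``rotation'' of the matching, strictly decreasing the number of cycles, until a single directed Hamilton cycle remains; since a random perfect matching has only $O(\log R)$ cycles, only $O(\log R)$ merges are needed. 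The main obstacle — and the reason plain greedy path-growing does not suffice — is exactly this merging step when one of the two cycles is short: two fixed short cycles admit only $p^{2}\cdot(\text{product of their lengths})\ll 1$ expected ``switch'' arcs between them, so no purely local patching argument works and one must argue globally, via a Pósa-type endpoint-expansion for the matching (equivalently, exploit that $\Gamma$ is a strong expander). This is precisely where the polylogarithmic slack in $p=\tfrac{3\log^5 R}{R}$ is spent: every vertex has $\Omega(\log^5 R)$ in- and out-neighbours, the error terms accumulated in the rotation argument all stay well below $R^{-3}$, and there is still room to prescribe the two endpoints of the path inside $A$ and $B$ through the vertex $\zeta$.
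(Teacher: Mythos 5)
Your approach is genuinely different from the paper's. You reveal the entire sets $A$ and $B$ of surviving arcs out of $s$ and into $t$, add an auxiliary vertex $\zeta$ to reduce ``Hamilton path of $D$ from $A$ to $B$'' to ``Hamilton cycle of $D^+$'', and then appeal to (or sketch a Pósa-rotation proof of) Hamiltonicity of random digraphs far above threshold. The paper instead fixes two \emph{specific} endpoints $v_1,v_R$ with $(s\to v_1),(v_R\to t)\in E'$, and invokes the result of Ferber et al.\ (cited as Theorem~1.3 of~\cite{Hamcyclepacking}) that in $D(R,p_0)$ with $p_0=\log^4 R/R$, with high probability \emph{every} surviving arc lies on some Hamilton cycle; conditioning on $(v_1\to v_R)$ surviving, this immediately yields a Hamilton path from $v_1$ to $v_R$ in $H_{p_0}[U]$ with probability at least $1/2$. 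The paper then amplifies the constant success probability to $1-R^{-3}$ by a simple trick: take $t=3\log R$ independent copies of $H_{p_0}$, observe the union is stochastically dominated by $H_p$ with $p = t\,p_0$, and note that the monotone event of having a Hamilton path fails in all copies with probability $(1/2)^t < R^{-3}$. This amplification step is the key structural move the paper uses to land an explicit $R^{-3}$ bound from an off-the-shelf ``w.h.p.'' statement, and it is absent from your argument.

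Your route is plausible in outline but has real gaps. First, the auxiliary vertex $\zeta$ has a \emph{deterministic} neighborhood (once $A,B$ are revealed), so $D^+$ is not a pure random digraph; the standard Hamiltonicity theorems are not stated for this perturbed model, and carrying the perturbation through a rotation argument requires work you only gesture at. Second, you need a failure probability of $R^{-3}$, not merely $o(1)$; the classical results you cite typically give $o(1)$, and getting an explicit polynomially small bound from them is exactly the gap the paper's amplification trick closes cleanly. Third, in your self-contained sketch the assertion that ``a random perfect matching has only $O(\log R)$ cycles'' is the statement for a \emph{uniform} random permutation, but the permutation you obtain is conditioned on being a perfect matching of a sparse random bipartite graph, a different distribution, so this needs justification. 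None of these are fatal to the idea, but together they leave the proposal well short of a complete proof, whereas the paper's combination of the Ferber et al.\ packing result with independent-copy amplification is short and tight.
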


\begin{proof}
    Let $p_0 \in (0,1)$, and consider the random graph $H_{p_0}$.
	Note that with probability at least
    $1 - 2(1-p_0)^R$
    there are two distinct vertices $v_1,v_R \in U$
	such that $(s \to v_1),(v_R \to t) \in E'$. Conditioning on these specific $v_1,v_R \in U$,
    we show that with high probability there is a Hamiltonian path from $v_1$ to $v_R$
    in the subgraph of $H_{p_0}$ induced by $U$.

    By a result of~\cite[Theorem 1.3]{Hamcyclepacking}
    if $D$ is a $p_0$-edge percolation of the complete directed graph
    with $R$ vertices with $p_0 = \frac{\log^4(R)}{R}$,
    then with high probability
    all edges of $D$ are contained in some Hamiltonian cycle in $D$.
    Note that the probability that $H_{p_0}$ contains
    a Hamiltonian path from $v_1$ to $v_R$ is equal to the probability
    that $H_{p_0}$ contains a Hamiltonian cycle that goes through the edge $(v_1 \to v_R)$,
    conditioned on the event that $(v_1 \to v_R) \in E'$.
    Therefore, since the distribution of the subgraph of $H_{p_0}$ induced by $U$
    is distributed like $D$, it follows that with high probability
    the subgraph $H_{p_0}$ induced by $U$ contains a Hamiltonian path from $v_1$ to $v_R$,
    and hence $H_{p_0}$ contains a Hamiltonian path from $s$ to $t$ with probability at least 1/2.

    Next, let $t = 3 \log(R)$, and let $p = t \cdot p_0$.
    We claim that the graph $H_p$ contains an Hamiltonian path from $s$ to $t$
    with probability at least $\frac{1}{R^3}$.
    Observe that if $H'_1,\dots,H'_t$ are independent copies of $H_{p_0}$,
    then the probability that none of the $H'_i$ contains a
    Hamiltonian path from $s$ to $t$ is at most $(1/2)^t < \frac{1}{R^3}$.
    Therefore, since each edge of $H$ is contained in $\cup_{i=1}^t H_i$
    independently with probability $1 - (1-p_0)^t \leq p$ it follows that
    $H_p$ contains an Hamiltonian path from $s$ to $t$
    with probability at least $\frac{1}{R^3}$, as required.
\end{proof}

\begin{proof}[Proof of Theorem~\ref{thm:ham cycle edge perc}]
	In order to prove the theorem, we show a reduction that given a directed graph
    $G = (V,E)$
    produces a directed graph $G' = (V',E')$ such that
    \begin{itemize}
    \item If $G$ contains a Hamiltonian cycle, then $G'$ contains a Hamiltonian cycle,
    and with high probability $G'_{p,e}$ contains a Hamiltonian cycle.
    \item If $G$ does not contain a Hamiltonian cycle, then neither does $G'$, and hence $G'_{p,e}$ does not contain a Hamiltonian cycle.
    \end{itemize}
    The reduction works as follows.
    Let $V = [N]$ be the vertices of $G$, and let $R$ be a parameter to be chosen later.
    The vertices of $G'$ will be $V' = V \bigcup( \cup_{i=1}^N U_i )$,
    where $U_i = \{u_1^i,\dots,u_R^i\}$.
    For each $i \in [N]$ the graph $G'$ contains all edges in both directions inside $U_i$.
    For each directed edge $(i \to j) \in E$ we add in $G'$ the directed edges
    \[	
    	\{(i \to u_\ell^i) : \ell \in [R]\} \cup \{(u_\ell^i \to j) : \ell \in [R]\}.
    \]
    That is, we turn the graph $G$ into $G'$
    by adding a clique $U_i$ for each vertex $v_i \in V$,
    and letting all edges outgoing from $v_i$ go through this clique.
    This completes the description of the reduction.

    Let us first show that that $G$ contains a Hamiltonian cycle
    if and only if $G'$ contains a Hamiltonian cycle.
    Indeed, suppose that $C = (\sigma_1,\dots,\sigma_N)$ is a Hamiltonian cycle
    in $G$.
    Then
    $C' = (\sigma_1,u_1^{\sigma_1}\dots,u_R^{\sigma_1},
        \dots ,\sigma_N,u_1^{\sigma_N}\dots,u_R^{\sigma_N})$
    is a Hamiltonian cycle in $G'$.
    In the other direction, suppose that $G'$ contains a Hamiltonian cycle $C'$.
	It is easy to see that any $i \in V$ appearing in $C'$
	must be followed immediately by a permutation of all $R$ vertices in $U_i$.
	Therefore, by restricting $C'$ to the vertices in $V$ we get a Hamiltonian cycle in $G$.
	
	Next we show that the reduction above is robust to edge percolation.
	Let $\tilde{G'} = {G'}_{p,e}$ be the edge percolation of $G'$.
	Clearly if $G'$ does not contain a Hamiltonian cycle, then neither does $\tilde{G'}$.
	Therefore, it is only left to show that if $G'$ contain a Hamiltonian cycle $C$,
	then with high probability $\tilde{G'}$ also contains a Hamiltonian cycle.
	As explained above a Hamiltonian cycle in $G'$
	is given by a permutation $\sigma =(\sigma_1,\dots \sigma_N)\in S_N$
	and an arbitrary ordering of the vertices in each $U_i$,
	i.e.,
    $C' = (\sigma_1,u_1^{\sigma_1}\dots,u_R^{\sigma_1}, \dots ,\sigma_N,u_1^{\sigma_N}\dots,u_R^{\sigma_N})$.
    Note that for each $i \in [N]$ the vertices
    $\{\sigma_i,u_1^{\sigma_i}\dots,u_R^{\sigma_i}, \sigma_{i+1}\}$ induce a subgraph isomorphic to the graph $H$
    from Claim~\ref{claim:ham path gadget}.
	Therefore, by Claim~\ref{claim:ham path gadget}
	if $p > \frac{\log^4(R) }{R}$, then for each $i \in [N]$
	with probability $1 - \frac{1}{R^3}$ there is path from $\sigma_i$ to $\sigma_{i+1}$
	that visits all vertices in $U_{\sigma_i}$.
	By taking union bound over all $i \in [N]$ we get that
	with probability $1-\frac{N}{R^3}$ such paths exist for all $i \in [N]$,
	and by concatenating them we conclude that $\tilde{G'}$
	contains a Hamiltonian cycle with high probability.

    Finally, we specify the choice of the parameter $R$.
    The obtained graph $H$ has $n = NR$ vertices,
    and the constraints we have are $p > \frac{\log^4 R}{R}$ and $R^3 \gg N$.
    Therefore, in order to prove the theorem for $p > \frac{1}{n^{1-\eps}}$
    with $\eps \in (0,1)$ it is enough to take $R = N^{1/c}$,
    where $c = \frac{\log(pn)}{\log(n)} > \eps$ such that $p = \frac{1}{n^{1-c}}$.
\end{proof}

\section{Constraint Satisfaction Problem and Percolation}

In this section we deal with percolation on Constraint Satisfaction Problems ($\CSP$).
An instance $\Phi$ of Boolean $\kCSP[k]$
is a formula consisting of a collection of clauses $C_1,...,C_m$ over $n$ Boolean variables $x_1,...,x_n$,
where each clause is associated with some $k$-ary predicate $f:\{0,1\}^k \rightarrow \{0,1\}$
over $k$ variables $x_{i_1},\dots,x_{i_k}$. An instance $\Phi$ is said to be simple of all clauses in $\Phi$ are distinct.
Given an assignment $\sigma : \{x_1,...,x_n\} \to \{0,1\}$
we say that the constraint $C$ on the variables $x_{i_1},\dots,x_{i_k}$
is a satisfied by $\sigma$ if $f_C(\sigma(x_{i_1},...,\sigma(x_{i_k})) = 1$,
where $f_C$ is the predicate corresponding to $C$.
Given a formula $\Phi$, and an assignment $\sigma$ to its variables
the value of $\Phi$ with respect to the assignment $\sigma$,
denoted by $\val_\sigma(\Phi)$, is fraction of constraints of $\Phi$ satisfied by $\sigma$.
The value of $\Phi$ is defined as $val(\Phi) = \max_{\sigma}\val_\sigma(\Phi)$.
If $\val(\Phi) = 1$ we say that $\Phi$ is satisfiable.

We are typically interested in $\CSP$ where constraints belong
to some fixed family of predicates $\F$.
For example, in the $\kSAT[k]$ problem, the constraints are all of
the form $f(z_1,\dots,z_k) = \bigvee_{i=1}^k (z_i = b_i)$,
for $b_1,\dots,b_k \in \{0,1\}$.
We assume that $k$, the arity of the constraints, is some fixed constant
that does not depend on the number of variables $n$.

These definitions give rise to the following optimization problem.
Given a $\CSP$ instance $\Phi$ find an assignment that maximizes the value of $\Phi$. We refer to this
maximization problem as $\maxCSPF$, where $\F$ denotes the family of predicates constraints are taken from.
For $0 < s < c \leq 1$, let $\gapCSPF(c,s)$ be the promise problem whose
YES-instances are formulas $\Phi$ such that $\val(\Phi) \geq c$,
and NO-instances are formulas $\Phi$ such that $\val(\Phi) \leq s$.
Here we assume the constraints of $\CSP$ instances are restricted
to be in some family $\F$.

We study two models of percolation on instances of $\CSP$.
In \emph{clause percolation} given an instance $\Phi$ of $\CSP$
its clause percolation is a random formula $\Phi^c_p$
over the same set of variables, that is obtained from $\Phi$
by keeping each clause of $\Phi$ independently with probability $p$.

In \emph{variable percolation} given an instance $\Phi$ of $\CSP$
the variable percolation is a random formula $\Phi^v_p$
whose set of variables is a subset $S$ of the variables of $\Phi$,
where each variable of $\Phi$ is in $S$ independently with probability $p \in(0,1)$
and the clauses of $\Phi^c_p$ are all clauses of $\Phi$ induced by $S$.
In other words, a clause $C$ of $\Phi$ survives if and only if all variables from $C$ the percolation process.

\paragraph{Clause percolation}
In this section we show that for Constraint Satisfaction Problem with a $k$-ary constraints,
the problem of approximating the optimal value on percolated instances
is essentially as hard as approximating it on a worst-case instance as long as
$p > \frac{1}{n^{k-1-\delta}}$ for any constant $\delta > 0$.

\begin{theorem}\label{thm:CSP clause perc}
Let $\eps,\delta \in (0,1)$ be fixed constants.
There is a polynomial time reduction
such that given a simple unweighted instance $\Phi$
outputs a simple unweighted instance $\Psi$ on $N$ variables
with the same constraints, such that $\val(\Psi) = \val(\Phi)$,
and furthermore for any $p>\frac{1}{n^{k-1-\delta}}$ the following holds.
\begin{enumerate}
    \item If $\val(\Phi)=1$, then $\val(\Psi^c_p)=1$ with probability 1.
    \item If $\val(\Phi)<1$, then with high probability $|\val(\Psi^c_p)- \val(\Phi)| < \eps$.
\end{enumerate}
\end{theorem}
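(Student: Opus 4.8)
The plan is to let $\Psi$ be a \emph{variable blow-up} of $\Phi$. Say $\Phi$ has variables $x_1,\dots,x_n$ and clauses $C_1,\dots,C_m$, and assume (w.l.o.g.) that every clause is on $k$ distinct variables. Fix a blow-up parameter $R$, to be chosen at the very end. Replace each $x_i$ by $R$ fresh variables $x_i^1,\dots,x_i^R$, and replace each clause $C = (f_C; x_{i_1},\dots,x_{i_k})$ by the $R^k$ clauses $(f_C; x_{i_1}^{j_1},\dots,x_{i_k}^{j_k})$, one for each $(j_1,\dots,j_k)\in[R]^k$, keeping the same predicate $f_C$. Then $\Psi$ is simple and unweighted, uses the same family of constraints, and has $N = nR$ variables and $M = mR^k$ clauses; since $m\le n^k|\F|$, taking $R=\poly(n)$ makes the whole construction polynomial-time.

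First I would check $\val(\Psi)=\val(\Phi)$. For ``$\ge$'', an assignment $\sigma$ of $\Phi$ lifts via $x_i^j := \sigma(x_i)$, and this lift satisfies exactly the blow-ups of the clauses of $\Phi$ satisfied by $\sigma$. For ``$\le$'', given $\tau$ on $\Psi$, sample $\sigma$ by picking for each $i$ an independent uniform $j_i\in[R]$ and setting $\sigma(x_i):=\tau(x_i^{j_i})$; since the variables of each clause are distinct, the chosen copies form a uniformly random element of $[R]^k$, so $\Pr_\sigma[\sigma \text{ satisfies } C]$ equals the fraction of the blow-up clauses of $C$ satisfied by $\tau$, and averaging over $C$ gives $\E_\sigma[\val_\sigma(\Phi)] = \val_\tau(\Psi)$; hence some $\sigma$ does at least as well as $\tau$. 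Item (1) is then immediate: a satisfying assignment of $\Phi$ lifts to one that satisfies every clause of $\Psi$, hence every clause of any subformula $\Psi^c_p$.

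For item (2), write $v=\val(\Phi)=\val(\Psi)<1$. The bound $\val(\Psi^c_p)>v-\eps$ is easy: apply a single Chernoff estimate to the lift $\tau^*$ of an optimal assignment of $\Phi$; in $\Psi^c_p$ the number of clauses it satisfies concentrates around $pvM$ and the total number of surviving clauses around $pM$, so $\val_{\tau^*}(\Psi^c_p)\ge v-\eps$ except with probability $e^{-\Omega(\eps^2 pM)}$. The bound $\val(\Psi^c_p)<v+\eps$ is the real point and needs a union bound over all $2^N$ assignments of $\Psi$: for a fixed $\tau$, the set $S_\tau$ of clauses it satisfies has $|S_\tau|\le vM$, so the number of clauses of $S_\tau$ that survive percolation exceeds $pvM+\tfrac{\eps}{4}pM$ with probability at most $e^{-\Omega(\eps^2 pM)}$ (the relative deviation is at least $\tfrac{\eps}{4}$ and the mean is at most $pM$); combining this with concentration of the total clause count, off a bad event of probability at most $2^N e^{-\Omega(\eps^2 pM)}$ every $\tau$ has $\val_\tau(\Psi^c_p)<v+\eps$, and therefore $\val(\Psi^c_p)=\max_\tau \val_\tau(\Psi^c_p)<v+\eps$.

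The main obstacle is then purely to choose $R$: we need $2^N e^{-\Omega(\eps^2 pM)}=o(1)$, i.e. $\eps^2 pM\gg N$, for \emph{every} $p>1/N^{k-1-\delta}$ (this also subsumes what the lower bound needs). The binding case is $p=N^{-(k-1-\delta)}$, where $pM = mR^k(nR)^{-(k-1-\delta)} = m\,n^{-(k-1-\delta)}R^{1+\delta}$, so the requirement $\eps^2 pM\ge C'N = C'nR$ becomes $R^{\delta}\ge C'n^{k-\delta}/(\eps^2 m)$; since $m\ge 1$ it suffices to take $R=\big\lceil (C'n^{k}/\eps^2)^{1/\delta}\big\rceil$, which is polynomial in $n$ for fixed $k,\eps,\delta$. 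Put differently, the blow-up forces $M = N^{k-\eta}$ with $\eta = O(1/\log R)$, so once $R$ is a large enough polynomial in $n$ we get $\eta<\delta$ and $pM > N^{1+(\delta-\eta)}$, which comfortably beats $2^N$. Carrying the absolute constants through the Chernoff and union-bound estimates completes the proof.
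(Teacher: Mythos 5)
Your proof is correct and takes essentially the same approach as the paper: the paper factors the argument into two lemmas (a blow-up step producing a simple instance with $n$ variables and $\geq n^{k-\delta}$ clauses while preserving the value via the same randomized-rounding argument, and a separate concentration/union-bound lemma showing percolation preserves the value when $p \gtrsim n/m$), whereas you combine these into a single calculation, but the construction, the averaging argument for $\val(\Psi)=\val(\Phi)$, and the Chernoff-plus-union-bound over all $2^N$ assignments are identical in substance.
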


Theorem~\ref{thm:CSP clause perc} immediately implies the following corollary.

\begin{corollary}\label{cor:CSP perc clause}
Let $\F$ be a collection of Boolean constraints of arity $k$,
and suppose that for some $0 < s < c \leq 1$ the problem $\gapCSPF(c,s)$ is $\NP$-hard.
Then $\gapCSPF(c-\eps,s+\eps)$ is $\NP$-hard under a robust reduction
with respect to clause percolation with any parameter $p>\frac{1}{n^{k-1-\delta}}$,
where $n$ denotes the number of variables in a given formula,
and $\eps, \delta>0$ are arbitrary constants.
\end{corollary}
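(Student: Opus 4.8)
The plan is to derive the corollary directly from Theorem~\ref{thm:CSP clause perc} by composing the given $\NP$-hardness reduction for $\gapCSPF(c,s)$ with the blow-up reduction $\Phi \mapsto \Psi$ supplied by the theorem. Concretely, suppose $\gapCSPF(c,s)$ is $\NP$-hard; I want to exhibit a $Perc$-robust reduction (in the sense of the definition in the introduction) from $\gapCSPF(c,s)$ to $\gapCSPF(c-\eps,s+\eps)$, where $Perc(\cdot)$ is clause percolation with parameter $p > \frac{1}{n^{k-1-\delta}}$. Since $\gapCSPF(c,s)$ is already $\NP$-hard, it suffices to take the reduction $r$ to be $\Phi \mapsto \Psi$ itself (the identity composed with the theorem's map). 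The key point is that $r$ preserves the family $\F$ of constraints — this is exactly the ``same constraints'' clause in the statement of Theorem~\ref{thm:CSP clause perc} — so $\Psi$ is a legal instance of $\gapCSPF$.

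The next step is to verify the two robustness conditions. First, the YES-case: if $\val(\Phi) \ge c$, I would like $\val(\Psi^c_p) \ge c - \eps$ with high probability. Here one has to be slightly careful, because Theorem~\ref{thm:CSP clause perc} states its guarantee for $\val(\Phi)=1$ (probability $1$) and for $\val(\Phi)<1$ (concentration within $\eps$), rather than for an arbitrary threshold $c$. If $c = 1$ the first bullet of the theorem gives $\val(\Psi^c_p)=1 \ge c-\eps$ with probability $1$. If $c < 1$, then since $\val(\Psi)=\val(\Phi)\ge c$ but possibly $\val(\Psi)<1$, I apply the second bullet: with high probability $|\val(\Psi^c_p)-\val(\Psi)| < \eps$, hence $\val(\Psi^c_p) > \val(\Psi) - \eps \ge c - \eps$. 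For the NO-case: if $\val(\Phi) \le s < 1$, then $\val(\Psi)=\val(\Phi)\le s$, and the second bullet again gives, with high probability, $\val(\Psi^c_p) < \val(\Psi) + \eps \le s + \eps$. Thus $r$ maps YES-instances of $\gapCSPF(c,s)$ to YES-instances of $\gapCSPF(c-\eps,s+\eps)$ (robustly, with high probability over the percolation) and NO-instances to NO-instances (robustly), which is exactly what it means for $\gapCSPF(c-\eps,s+\eps)$ to be $\NP$-hard under a $Perc$-robust reduction.

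The only genuinely nontrivial content is already packaged in Theorem~\ref{thm:CSP clause perc}; the corollary is a bookkeeping step, so the main obstacle is purely notational: matching the number-of-variables convention (the theorem's $\Psi$ has $N$ variables while the percolation parameter in the corollary is phrased in terms of $n$) and checking that $p > \frac{1}{n^{k-1-\delta}}$, stated in terms of the \emph{output} formula's variable count, is the right hypothesis to feed into the theorem. Since the theorem's reduction is allowed to choose $N$ as a polynomial (in fact a fixed power) of the input size, and the deletion probability in the theorem is expressed in the output variable count as well, these match up directly. I would also note in passing that if the original $\NP$-hardness of $\gapCSPF(c,s)$ is with $c=1$ (a perfect-completeness gap problem, as for $\kSAT[3]$ via H\aa stad), then the resulting robust reduction is in fact a $\coRP$-reduction, by the probability-$1$ guarantee in the first bullet of the theorem; this is worth stating since it is what feeds into Proposition~\ref{prop:coRP search}.
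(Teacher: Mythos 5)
Your proof is correct and matches the paper's approach, which simply asserts that Corollary~\ref{cor:CSP perc clause} follows immediately from Theorem~\ref{thm:CSP clause perc} by composing the assumed $\NP$-hardness reduction with the theorem's blow-up map and checking the two robustness conditions. One small slip: in the YES-case with $c<1$, the split should be on whether $\val(\Phi)=1$ (apply the first bullet, giving $\val(\Psi^c_p)=1$ with probability $1$) or $\val(\Phi)<1$ (apply the second bullet), not on $c$ alone, since $c<1$ does not preclude $\val(\Phi)=1$; the conclusion $\val(\Psi^c_p) > c-\eps$ with high probability holds in both sub-cases.
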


As a particular application, we have the following result regarding
the hardness of approximating $\kSAT[3]$ on clause-percolated instances.

\begin{theorem}\label{thm:3SAT clause perc}
Let $\eps,\delta \in (0,1)$ be fixed constants.
Then, unless $\NP \seq \coRP$,
there is no polynomial time algorithm that when given a satisfiable instance $\Phi$
of $\kSAT[3]$ finds an assignment $\sigma$ to $\Phi^c_p$ such that
$\val_\sigma(\Phi^c_p) > 7/8 + \eps$ for all $p>\frac{1}{n^{2-\delta}}$.
\end{theorem}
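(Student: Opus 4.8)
The plan is to combine the inapproximability of $\kSAT[3]$ due to H{\aa}stad with the clause-percolation machinery of Theorem~\ref{thm:CSP clause perc}, and then invoke Proposition~\ref{prop:coRP search}. First recall H{\aa}stad's theorem: for every $\eps > 0$ it is $\NP$-hard, given a \emph{satisfiable} $\kSAT[3]$ instance $\Phi$, to find an assignment satisfying more than a $7/8 + \eps$ fraction of its clauses. Equivalently, for the family $\F$ of $3$-clauses, it is $\NP$-hard to distinguish $\val(\Phi) = 1$ from $\val(\Phi) \le 7/8 + \eps/2$, and moreover the hardness is of the "find a good assignment" search type when $\Phi$ is promised satisfiable.

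Next, apply Theorem~\ref{thm:CSP clause perc} with $k = 3$ and error parameter $\eps' = \eps/2$: it produces, from $\Phi$, a simple unweighted $\kSAT[3]$ instance $\Psi$ on $N$ variables with the same constraint family and $\val(\Psi) = \val(\Phi)$, such that for any $p > \frac{1}{N^{2-\delta}}$ we have (1) if $\val(\Phi) = 1$ then $\val(\Psi^c_p) = 1$ with probability $1$, and (2) if $\val(\Phi) < 1$ then with high probability $|\val(\Psi^c_p) - \val(\Phi)| < \eps/2$. Here it is important that the reduction takes a satisfiable worst-case instance to $\Psi$ with $\val(\Psi) = 1$; since it also blows up the number of variables from $n$ to $N$, one checks $N$ is polynomial in $n$ so that $p > \frac{1}{n^{2-\delta}}$ (measured in the original variable count) still satisfies $p > \frac{1}{N^{2-\delta'}}$ for a slightly worse $\delta'$, which suffices after renaming constants.

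Now suppose, for contradiction, that a polynomial-time (randomized) algorithm $A$ is given a satisfiable $\kSAT[3]$ instance and, with high probability over the percolation $\Psi^c_p$ and its internal coins, outputs an assignment $\sigma$ with $\val_\sigma(\Psi^c_p) > 7/8 + \eps$. Run the composed reduction $\Phi \mapsto \Psi \mapsto \Psi^c_p$ and feed $\Psi^c_p$ to $A$. Any assignment $\sigma$ to the variables of $\Psi^c_p$ is also an assignment to $\Psi$ (same variable set), and since $\Psi^c_p$ is a sub-collection of the clauses of $\Psi$, if $\val(\Psi) \le 7/8 + \eps/2$ then with high probability $\val(\Psi^c_p) \le 7/8 + \eps/2 + \eps/2 = 7/8 + \eps$ by item (2), so $A$ cannot find such a $\sigma$ — hence $\Psi$, and therefore $\Phi$, would be certified not satisfiable. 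When $\Phi$ is satisfiable, $\val(\Psi^c_p) = 1$ and $A$ succeeds with high probability in finding $\sigma$ with $\val_\sigma(\Psi^c_p) > 7/8 + \eps$. Thus $A$ distinguishes the two cases with high probability, giving a one-sided (coRP-type) algorithm for the $\NP$-hard satisfiability-vs-far-from-satisfiable promise problem; by Proposition~\ref{prop:coRP search} (whose hypothesis is met because Theorem~\ref{thm:CSP clause perc} item (1) gives a $\coRP$-type robust reduction), this forces $\NP \seq \coRP$, a contradiction.

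The main obstacle, and the only genuinely non-trivial point, is bookkeeping the variable blowup: one must verify that the number of variables $N$ of $\Psi$ produced by Theorem~\ref{thm:CSP clause perc} is at most $\poly(n)$ with an explicit enough bound that the threshold $p > \frac{1}{n^{2-\delta}}$ translates cleanly into the threshold $p > \frac{1}{N^{2-\delta'}}$ required to invoke item (2); this is where the constant $\delta$ degrades slightly and must be chosen accordingly at the start. Everything else is a direct chaining of H{\aa}stad's hardness, the percolation-robustness of Theorem~\ref{thm:CSP clause perc}, and Proposition~\ref{prop:coRP search}.
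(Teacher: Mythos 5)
Your proposal follows essentially the same route as the paper: invoke H{\aa}stad's $7/8+\eps$ hardness, apply Theorem~\ref{thm:CSP clause perc} to get a clause-dense instance robust to clause percolation, and conclude via Proposition~\ref{prop:coRP search}. Two small points worth noting. First, Theorem~\ref{thm:CSP clause perc} requires its \emph{input} to be a simple unweighted instance, whereas H{\aa}stad's result is stated for weighted $\kSAT[3]$; the paper closes this gap by citing a separate result showing that the hardness persists for simple unweighted instances before applying Theorem~\ref{thm:CSP clause perc}. You implicitly assume the H{\aa}stad instance is already simple and unweighted, so the chain is incomplete without this intermediate step. Second, the ``variable blowup bookkeeping'' you flag as the ``only genuinely non-trivial point'' is actually a non-issue: in the theorem statement, $n$ denotes the number of variables of the instance $\Phi$ \emph{handed to the algorithm}, and in the reduction that instance \emph{is} the blown-up $\Psi$ produced by Theorem~\ref{thm:CSP clause perc}. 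There is no translation between ``$p$ measured in the original variable count'' and ``$p$ measured in $N$''; the threshold is phrased directly in terms of the instance the algorithm sees, so no $\delta\to\delta'$ degradation arises at the level of this theorem (the only degradation is internal to the proof of Theorem~\ref{thm:CSP clause perc}, between the number of clauses guaranteed by Lemma~\ref{lemma:dense CSP} and the percolation threshold of Lemma~\ref{lemma:3SAT clause perc}, and is already absorbed there).
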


\begin{proof}
By the result of H{\aa}stad~\cite{Hastad01}
for any constant $\eps>0$ given a weighted $\kSAT[3]$ instance $\phi$
it is $\NP$-hard to distinguish between the case that that $\phi$
is satisfiable, and the case that $\val(\phi) < 7/8 + \eps$.
Combining this result with the result of~\cite{weighted}
we get that the same problem is $\NP$-hard also for unweighted simple instances.
The proof follows by applying Theorem~\ref{thm:CSP clause perc}.
\end{proof}

We now return to the proof of Theorem~\ref{thm:CSP clause perc}.

\begin{proof}[Proof of Theorem~\ref{thm:CSP clause perc}]
 We start with the following lemma.
\begin{lemma}\label{lemma:3SAT clause perc}
Let $\Phi$ be a simple unweighted a $\kCSP[k]$ instance with $n$ variables and $m$ clauses,
and let $p > \frac{Cn}{\eps^2 m}$ for some $\eps \in (0,1)$ and
some absolute constant $C > 0$. Then,
\begin{enumerate}
    \item If $\val(\Phi)=1$, then $\val(\Phi^c_p)=1$.
    \item $\val(\Phi)<1$, with high probability $|\val(\Phi^c_p) - \val(\Phi)| < \eps$.
\end{enumerate}
\end{lemma}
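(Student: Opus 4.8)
The plan is to treat the two items separately. Item~1 is immediate: if an assignment $\sigma$ satisfies every clause of $\Phi$, then it satisfies every clause of $\Phi^c_p$ as well, since the clauses of $\Phi^c_p$ form a subcollection of those of $\Phi$; hence $\val(\Phi^c_p)=1$ with probability~$1$, and the argument below will never use the hypothesis $\val(\Phi)<1$.

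For Item~2 the heart of the matter is the uniform estimate that, with high probability, $|\val_\sigma(\Phi^c_p) - \val_\sigma(\Phi)| < \eps$ holds simultaneously for \emph{every} assignment $\sigma$. Granting this, the lower bound $\val(\Phi^c_p) \ge \val(\Phi) - \eps$ follows by plugging in an optimal assignment for $\Phi$, and the upper bound $\val(\Phi^c_p) \le \val(\Phi)+\eps$ follows since $\val_\sigma(\Phi^c_p) \le \val_\sigma(\Phi)+\eps \le \val(\Phi)+\eps$ for all $\sigma$; together these give $|\val(\Phi^c_p)-\val(\Phi)| < \eps$. It is convenient to prove the uniform estimate with $\eps/10$ in place of $\eps$ and rescale at the very end, so that the constant slack used below is harmless. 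Introduce notation: $m$ is the number of clauses of $\Phi$, $M$ the random number of surviving clauses of $\Phi^c_p$, $q_\sigma$ the number of clauses of $\Phi$ satisfied by $\sigma$, and $S_\sigma$ the number of surviving clauses satisfied by $\sigma$, so that $\val_\sigma(\Phi)=q_\sigma/m$ and $\val_\sigma(\Phi^c_p)=S_\sigma/M$. Since $\Phi$ is simple, $M$ is a sum of $m$ independent $\mathrm{Bernoulli}(p)$ indicators, and $S_\sigma$ is a sum of $q_\sigma$ such indicators.

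The proof then rests on two applications of the Chernoff bound. First, since $pm > Cn/\eps^2 \to \infty$, the multiplicative Chernoff bound gives $|M-pm| \le \tfrac{\eps}{10}\,pm$ with probability $1-e^{-\Omega(\eps^2 pm)} = 1-o(1)$; condition on this event. Second, fix $\sigma$. On the conditioned event, the inequality $|S_\sigma/M - q_\sigma/m| \ge \eps$ forces $S_\sigma$ to deviate from its mean $p q_\sigma$ by an additive $\Omega(\eps pm)$, that is, a relative deviation $\delta = \Omega(\eps m/q_\sigma) = \Omega(\eps)$ (using $q_\sigma\le m$). Applying the multiplicative Chernoff bound --- directly when $\delta\le 1$, and when $\delta>1$ bounding the tail by $e^{-\Omega(\delta p q_\sigma)} = e^{-\Omega(\eps pm)}$, and noting that when $q_\sigma$ is so small that $q_\sigma/m - \eps < 0$ the lower tail is vacuous --- the probability that this particular $\sigma$ violates $|\val_\sigma(\Phi^c_p)-\val_\sigma(\Phi)|<\eps$ is at most $e^{-\Omega(\eps^2 pm)} \le e^{-\Omega(Cn)}$. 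A union bound over all $2^n$ assignments makes the total failure probability at most $2^n \cdot e^{-\Omega(Cn)}$, which is $o(1)$ provided the absolute constant $C$ in the hypothesis $p > Cn/(\eps^2 m)$ is large enough relative to the constant in the Chernoff exponent.

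I expect the only delicate part to be the bookkeeping in the last step: converting $|S_\sigma/M-q_\sigma/m|\ge\eps$ into a clean additive deviation of $S_\sigma$ from its mean (which is where the $M$-concentration and the $\eps/10$ slack enter), handling uniformly over $\sigma$ both the small and the large regimes of $q_\sigma$, and propagating the constant factors in front of $\eps$ through to the final rescaling. The conceptual content, and the reason the hypothesis has the shape $p > Cn/(\eps^2 m)$, is simply that $\eps^2 pm$ must exceed $Cn$ so that the per-assignment Chernoff bound $e^{-\Omega(\eps^2 pm)}$ beats the union bound over the $2^n$ assignments.
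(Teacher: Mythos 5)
Your proof is correct and takes essentially the same approach as the paper's: concentrate the number $m'$ of surviving clauses, concentrate $S_\sigma$ for each fixed $\sigma$, and union-bound over all $2^n$ assignments, which is exactly why the hypothesis $\eps^2 pm \gtrsim n$ has the shape it does. One small point in your favor: you explicitly handle the regime where $q_\sigma \ll \eps m$ (multiplicative deviation $\delta > 1$), a case the paper's application of the preliminary Chernoff bound glosses over with a remark that $\val(\Phi) \ge 2^{-k}$, which addresses the optimal value but not the low-value assignments $\sigma'$ that also enter the union bound.
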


\begin{proof}
    The first item is clear, as any assignment that satisfies $\Phi$ will also satisfy $\Phi^c_p$.
    For the second item, denote by $m'$ the number of clauses in $\Phi^c_p$.
	By concentration bounds we have
	\[
		\Pr[|m' - pm| > \eps pm] < e^{-\Omega(\eps^2 pm)} < e^{-\Omega(Cn)},
	\]
	where $\Omega(\cdot)$ hides some absolute constant.
	Fix an assignment $\sigma$ to the variables of $\Phi$, and let $s = \val_\sigma(\Phi)$.
	Then, the number of clauses in $\Phi$ satisfied by $\sigma$ is $s m$.
	Denote by $S_\sigma$ the number clauses in $\Phi^c_p$ satisfied by $\sigma$.
	Since we pick each clause with probability $p$ independently, we have
	\[
		\Pr[ |S_\sigma - s pm |> \eps pm ] < e^{-\Omega(\eps^2 pm)} < e^{-\Omega(Cn)},
	\]
	and hence
	\begin{eqnarray*}
		\Pr[ |\val_\sigma(\Phi^c_p) - s| > \eps ]
		& = & \Pr[ |S_\sigma - s m'| > \eps m' ] \\
		& \leq & \Pr[|m' - pm| > \eps pm/2] + \Pr[| S_\sigma - s pm | >  \eps pm/2 ] \\
		& \leq & 2 e^{-\Omega(Cn)},
	\end{eqnarray*}	
	where $\Omega(\cdot)$ hides some absolute constant.

	Suppose now that that $\val(\Phi) = s$.
	If $\sigma$ is an optimal assignment to $\Phi$, i.e.,
	$\val_\sigma(\Phi) = s$, then we immediately have by the argument above that $\val_\sigma(\Phi^c_p) > s - \eps$ with high probability.
	On the other hand, for any assignment $\sigma'$ it holds that
	$\Pr[ \val_{\sigma'}(\Phi^c_p) > s + \eps ] < e^{-\Omega(Cn)}$
	for some sufficiently large $C>0$, and by
	taking union bound over all assignments $\sigma$ we get
	\[
		\Pr[ \val(\Phi^c_p) > s + \eps ] <
		\Pr[\exists \sigma' \textrm{ such that } \val_{\sigma'}(\Phi^c_p) > s + \eps ] < c^n
	\]
	for some absolute constant $c < 1$.
\end{proof}

We note that we assume in the proof above that $s$ is a constant independent of $n$.
This assumption is justified as $s \geq \frac{1}{2^k}$, and we assume that $k$ is independent of $n$.

Next, we show a polynomial time reduction such that given a $\maxCSPF$ instance $\Phi$ outputs a
$\maxCSPF$ instance $\Psi$ with $N$ variables and $N^{k-\eps}$ clauses such that
$\val(\Psi) = \val(\Phi)$. We use similar ideas to those used in \cite{weighted}
in proving that unweighted instances of CSP problems are as hard to approximate as in the weighted case.

\begin{lemma}\label{lemma:dense CSP}
For any $\delta \in (0,1)$ there is a polynomial time reduction
such that given a simple unweighted $\maxCSPF$ instance $\Phi$ outputs a simple
$\maxCSPF$ instance $\Psi$ with the same constraint
with $n$ variables and at least $n^{k-\delta}$ clauses such that $\val(\Psi) = \val(\Phi)$.
\end{lemma}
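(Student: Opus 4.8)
The plan is to realize $\Psi$ as a blow-up of $\Phi$ in which every clause is \emph{replicated}, following the unweighted-reduction idea of~\cite{weighted}. Write $\Phi$ as an instance on variables $y_1,\dots,y_{n_0}$ with $m_0\ge 1$ clauses, and assume (as is standard, and consistent with the definition above) that the $k$ variables of each clause are distinct. Fix a blow-up parameter $R$, to be chosen below; replace each $y_i$ by a cloud $x_i^1,\dots,x_i^R$, and for every clause $C$ of $\Phi$ -- say on the tuple $(y_{i_1},\dots,y_{i_k})$ with predicate $f_C\in\F$ -- put into $\Psi$ all $R^k$ clauses $f_C(x_{i_1}^{a_1},\dots,x_{i_k}^{a_k})$ with $(a_1,\dots,a_k)\in[R]^k$. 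Every clause of $\Psi$ uses a predicate from $\F$, so $\Psi$ is a $\maxCSPF$ instance; it has $N:=n_0 R$ variables and $M:=m_0 R^k$ clauses; and it is simple, since two of its clauses coincide only when both the underlying clauses of $\Phi$ and the offset tuples coincide, and $\Phi$ is simple.

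For the density requirement I would take $R=\lceil n_0^{k/\delta}\rceil$, so that $R^\delta\ge n_0^{k}\ge n_0^{k-\delta}/m_0$, whence $M=m_0R^{k-\delta}\cdot R^\delta\ge (n_0R)^{k-\delta}=N^{k-\delta}$. With this choice $R$ is a fixed polynomial in $n_0$, so $N$ and $M$ are polynomial in the input size and $\Psi$ is produced in polynomial time.

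It remains to prove $\val(\Psi)=\val(\Phi)$. The inequality $\val(\Psi)\ge\val(\Phi)$ is immediate: an optimal assignment $\sigma$ for $\Phi$ lifts to the assignment $x_i^a\mapsto\sigma(y_i)$, which satisfies precisely the $R^k$ copies of every clause satisfied by $\sigma$ and no others, so it achieves value $\val(\Phi)$ in $\Psi$. For the reverse inequality, let $\tau$ be an arbitrary assignment to $\Psi$, draw offsets $a_1,\dots,a_{n_0}\in[R]$ independently and uniformly, and set $\sigma(y_i)=\tau(x_i^{a_i})$. Because the $k$ variables of a clause $C$ of $\Phi$ are distinct, their associated offsets are independent and uniform, so the probability that $\sigma$ satisfies $C$ equals exactly the fraction of the $R^k$ copies of $C$ that $\tau$ satisfies; summing over the clauses of $\Phi$ gives $\E_\sigma[\val_\sigma(\Phi)]=\val_\tau(\Psi)$. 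Hence some assignment $\sigma$ satisfies $\val_\sigma(\Phi)\ge\val_\tau(\Psi)$, and taking $\tau$ to be optimal for $\Psi$ yields $\val(\Phi)\ge\val(\Psi)$. Combining the two inequalities gives $\val(\Psi)=\val(\Phi)$; together with Lemma~\ref{lemma:3SAT clause perc} applied to $\Psi$ (whose clause count is $n^{k-\delta}\gg n$) this is what Theorem~\ref{thm:CSP clause perc} needs.

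The only genuine obstacle is the averaging step in the last direction: one has to rule out that blowing each clause up into \emph{all} $R^k$ offset copies lets some assignment giving different values to variables of the same cloud outperform every assignment to $\Phi$, and the independence of the offsets across the (distinct) variables of a single clause is precisely what prevents this. Everything else -- simplicity of $\Psi$, the bound $M\ge N^{k-\delta}$, and the polynomial running time -- is a routine computation once $R$ is fixed.
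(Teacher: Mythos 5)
Your proof is correct and is essentially the same as the paper's: the same $R$-fold blow-up of variables and $R^k$-fold blow-up of clauses, the same choice $R\approx n_0^{k/\delta}$ for the density bound, the lift for $\val(\Psi)\ge\val(\Phi)$, and the random-offset decoding (equivalently, rounding each $y_i$ to a uniformly random copy) for $\val(\Phi)\ge\val(\Psi)$. You make explicit a couple of points the paper leaves implicit (that each clause's $k$ variables are distinct, which the independence of offsets relies on, and that simplicity is preserved), but the argument is the same.
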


\begin{proof}
The reduction works as follows. Let $R$ be a parameter to be chosen later.
Given an instance $\Phi$ of $\kCSP[k]$ with $M$ clauses
over the variables $x_1,\dots,x_N$ the reduction creates the following instance $\Psi$.
For each variable $x_i$ of $\Phi$, the instance $\Psi$
will have a set of $R$ corresponding variables $X_i = \{x_{i,j} : j \in [R]\}$,
where we think of each variable in $X_i$ as a copy of $x_i$.
For each clause $C$ of $\Phi$
we add to $\Psi$ a \emph{cloud} of $R^k$ corresponding clauses,
by taking all possible combinations of the variables from the corresponding $X_i$'s.
We call the set of $R^k$ clauses corresponding to $C$ the \emph{cloud} of $C$.
That is, $\Psi$ has $n = NR$ variables and $m = M \cdot R^k$ clauses.
Therefore, if $R > N^{k/\delta}$, then $m > n^{k-\delta}$.

Next we claim that $\val(\Phi) = \val(\Psi)$.
Clearly, we have $\val(\Phi) \leq \val(\Psi)$,
as any assignment $\sigma : \{x_1,\dots,x_N\} \in \{0,1\}$
to $\Phi$ can be extended to the assignment $\tau$ to $\Psi$
by letting $\tau(x_{i,j}) = \sigma(x_i)$ for all $i \in [N],j \in [R]$.

In the other direction, let $\tau$ be an assignment to the variables of $\Psi$.%
\footnote{Note that if for each $i \in [N]$ the assignment $\tau$ gave
the same value to all variables in $X_i$, this would naturally
induce a corresponding assignment to $\Phi$. However, this need not be the case in general.}
For each $i \in [N]$ let $p_i^1 = \frac{|\{j \in [R] : \tau(x_{i,j} = 1)\}|}{R}$
be the fraction of $x_{i,j}$'s that are assigned the value 1,
and let $p_i^0 = 1 - p_i^1$.
be the fraction of $x_{i,j}$'s that are assigned the value 0.
Construct a assignment $\sigma$ to the variables of $\Phi$ randomly, by setting
$\sigma(x_i) = 1$ with probability $p_i^1$,
and setting $\sigma(x_i) = 0$ with probability $p_i^0$ independently.
Equivalently we choose one of the $R$ copies of $x_i$ in $\Psi$ uniformly at random
and assign to $x_i$ the value assigned by $\tau$ to the variable chosen.
A moment of thought reveals that for each clause $C$ of $\Phi$,
the probability that $\sigma$ satisfies $C$ is equal to the fraction
of the clauses in $\Psi$ in the cloud corresponding to $C$ that are satisfied by $\tau$.
Denote by $SAT_\sigma(C_i)$ the number of clauses that are satisfied by $\sigma$
in the cloud corresponding to $C_i$.
Since each clause of $\Phi$ corresponds to the same number of clauses in $\Psi$,
it follows that the expected value of $\Phi$ under the assignment $\sigma$ is
\begin{eqnarray*}
	\E[\val_\sigma(\Phi)]
	& = & \frac{1}{M} \sum_{i=1}^M \Pr[\textrm{$\sigma$ satisfies $C_i$}] \\
	& = & \frac{1}{M} \sum_{i=1}^M \frac{SAT_\sigma(C_i)}{R^3} \\
	& = & \val_\tau(\Psi).
\end{eqnarray*}
Hence, there exists an assignment $\sigma$ to the variables of $\Phi$
such that $\val_\sigma(\Phi) \geq \val_\tau(\Psi)$,
and thus $\val(\Phi) \geq \val(\Psi)$, as required.
\end{proof}

Theorem~\ref{thm:CSP clause perc} follows immediately from Lemmas~\ref{lemma:3SAT clause perc}
and~\ref{lemma:dense CSP}.

\end{proof}
\medskip
We observe that it is unlikely that Lemma~\ref{lemma:dense CSP} could be generalized to $\maxCSPF$ instances with-arity $k$ and $\Omega(n^k)$ constraints. For example, the value of a $\kSAT$ formula with $\Omega(n^3)$ clauses, admits $1-\delta$ approximation for every $\delta \in (0,1)$ in polynomial time \cite{Arora}.

\paragraph{Variable percolation}
Next we show that $\maxCSPF$ is also hard under variable percolation.
We prove below that for $p$ that is no too small,
with high probability $\maxCSPF$ is hard to approximate on
percolated instances within the same factor as in the worst-case setting.

\begin{theorem}\label{thm:CSP var perc}
Let $\eps,\delta > 0$ be fixed constants.
There is a polynomial time reduction
such that given a simple unweighted instance $\Phi$
outputs a simple unweighted instance $\Psi$ on $n$ variables
with the same constraints, such that $\val(\Psi) = \val(\Phi)$,
and furthermore for any $p>\frac{1}{n^{1-\delta}}$ the following holds.
\begin{enumerate}
    \item If $\val(\Phi)=1$, then $\val(\Psi^v_p)=1$ with probability 1.
    \item If $\val(\Phi)<1$, then with high probability $|\val(\Psi^v_p)- \val(\Phi)| < \eps$.
\end{enumerate}
\end{theorem}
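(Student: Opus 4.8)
The plan is to reuse the blowup reduction of Lemma~\ref{lemma:dense CSP}, with a large enough blowup parameter $R$, and to argue that variable percolation on the blown‑up instance is well behaved because each cloud of $R$ copies of a variable keeps roughly $pR$ survivors, so each cloud of $R^k$ clauses keeps roughly $(pR)^k$ survivors. Concretely, given $\Phi$ with $N$ variables and $M$ clauses, set $R=N^{2/\delta}$ and let $\Psi$ be the $R$-blowup from Lemma~\ref{lemma:dense CSP}, so that $\Psi$ has $n=NR$ variables, $MR^{k}$ clauses, and $\val(\Psi)=\val(\Phi)$ by the argument there. For item~1, an assignment satisfying $\Phi$ lifts to the assignment of $\Psi$ that gives the same value to all copies of each variable; its restriction to any surviving set of variables satisfies every surviving clause, so $\val(\Psi^v_p)=1$ deterministically.

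For item~2 I would condition on the random set $S$ of surviving variables. Writing $p=n^{-(1-c)}$ with $c=\log(pn)/\log n>\delta$, the choice $R=N^{2/\delta}$ gives $pR\ge N^{1+\delta}$, so Corollary~\ref{cor:concentration for m sequences}, applied with the $N$ clouds as the $N$ sequences, shows that with probability $\ge 1-N^{-3}$ every cloud $X_i$ has $|X_i\cap S|\in(1\pm\eps')pR$ for any prescribed constant $\eps'$; assume this event. Then, for a clause $C$ of $\Phi$ on $k$ distinct variables, the number $b_C$ of surviving clauses in the cloud of $C$ is $\prod_t|X_{i_t}\cap S|\in[(1-\eps')^k,(1+\eps')^k](pR)^k$, so all the $b_C$ are within a factor $(\tfrac{1+\eps'}{1-\eps'})^k\le 1+\eps$ of one another (here we use that $k$ is a fixed constant and choose $\eps'$ small). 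For the lower bound, lift an optimal assignment of $\Phi$ and restrict it to $S$: a surviving cloud-clause of $C$ is then satisfied iff the optimal assignment satisfies $C$, so $\val$ of the restricted assignment is a weighted average of these indicators with near-equal weights $b_C$, hence at least $(1-\eps)\val(\Phi)\ge\val(\Phi)-\eps$. For the upper bound I would run the randomized rounding of Lemma~\ref{lemma:dense CSP}: for any assignment $\tau'$ of $S$, set $\sigma(x_i)=1$ with probability the fraction of $X_i\cap S$ that $\tau'$ sets to $1$; then for each $C$ the probability $\sigma$ satisfies $C$ equals $a_C/b_C$, where $a_C$ is the number of surviving cloud-clauses of $C$ satisfied by $\tau'$, so $\E_\sigma[\val_\sigma(\Phi)]=\frac1M\sum_C a_C/b_C$ while $\val_{\tau'}(\Psi^v_p)=(\sum_C a_C)/(\sum_C b_C)$; since the $b_C$ are near-equal, these differ by at most a factor $1+\eps$, giving $\val_{\tau'}(\Psi^v_p)\le(1+\eps)\E_\sigma[\val_\sigma(\Phi)]\le(1+\eps)\val(\Phi)\le\val(\Phi)+\eps$. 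Crucially this holds for every $\tau'$, so no union bound over assignments is needed, and combining the two bounds yields $|\val(\Psi^v_p)-\val(\Phi)|\le\eps$ with probability $\ge1-N^{-3}$.

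The reduction runs in polynomial time since $R=N^{2/\delta}$ is a fixed polynomial in $N$ (recall $\delta$ is a fixed constant), and one checks $pR\ge N^{1+\delta}$ for every admissible $p>n^{-(1-\delta)}$ with $n=NR$, which is what the concentration step needs. The main obstacle is exactly this point: unlike the deterministic blowup of Lemma~\ref{lemma:dense CSP}, where each clause-cloud has exactly $R^k$ clauses and the rounding argument gives an exact identity, here the clause-clouds survive with slightly unequal sizes, so $\val_{\tau'}(\Psi^v_p)$ is a ratio of sums rather than an average of ratios; the whole argument hinges on per-cloud concentration $|X_i\cap S|=(1\pm\eps')pR$ being strong enough (hence on taking $R$ so large that $pR\gg\log N$) to force these two quantities to agree up to a $(1+\eps)$ factor. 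A secondary technicality is constraints with repeated variables, where the rounding correspondence needs a minor modification; one may instead assume, as is standard and as holds for $\kSAT[k]$, that each constraint is on $k$ distinct variables.
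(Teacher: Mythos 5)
Your proposal is correct and follows essentially the same route as the paper: the same $R$-blowup reduction of Lemma~\ref{lemma:dense CSP}, the same per-cloud concentration of $|X_i\cap S|$ around $pR$, the same lifting of an optimal assignment for the lower bound, and the same randomized rounding (average of ratios vs.\ ratio of sums) for the upper bound, with no union bound over assignments. The one minor deviation is that you fix $R=N^{2/\delta}$ independently of $p$, which is actually a cleaner choice than the paper's $R=N^{1/c}$ with $c$ depending on $p$ and $n$, and you flag the repeated-variable caveat explicitly; both are cosmetic rather than substantive differences.
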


The following corollary is the analogue of Corollary~\ref{cor:CSP perc clause} for variable percolation.

\begin{corollary}\label{cor:CSP perc vertex}
Let $\F$ be a collection of Boolean constraints of arity $k$,
and suppose that for some $0 < s < c \leq 1$ the problem $\gapCSPF(c,s)$ is $\NP$-hard.
Then $\gapCSPF(c-\eps,s+\eps)$ is $\NP$-hard under a robust reduction
with respect to vertex percolation with any parameter $p>\frac{1}{n^{1-\delta}}$,
where $n$ denotes the number of variables in a given formula,
and $\eps, \delta>0$ are arbitrary constants.
\end{corollary}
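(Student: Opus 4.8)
The plan is to obtain Corollary~\ref{cor:CSP perc vertex} from Theorem~\ref{thm:CSP var perc} by the same composition argument used to pass from Theorem~\ref{thm:CSP clause perc} to Corollary~\ref{cor:CSP perc clause}. Fix constants $\eps,\delta>0$, assume $\gapCSPF(c,s)$ is $\NP$-hard (for simple unweighted instances, which is the relevant regime — otherwise first pass to simple unweighted instances as in the proof of Theorem~\ref{thm:3SAT clause perc}), and assume, as is implicit in the statement, that $\eps$ is small enough that $c-\eps>s+\eps$ so that the two sides of $\gapCSPF(c-\eps,s+\eps)$ are disjoint. Let $r$ be the polynomial-time reduction guaranteed by Theorem~\ref{thm:CSP var perc} instantiated with this $\eps$ and $\delta$: it sends a simple unweighted $\F$-instance $\Phi$ to a simple unweighted $\F$-instance $\Psi=r(\Phi)$ on $n$ variables with the same predicates, with $\val(\Psi)=\val(\Phi)$, and satisfying the two robustness guarantees for every $p>\frac{1}{n^{1-\delta}}$. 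I would prove that $r$, together with the percolation distribution $Perc(\Psi)=\Psi^v_p$, is a $Perc$-robust reduction from $\gapCSPF(c,s)$ to $\gapCSPF(c-\eps,s+\eps)$; the corollary then follows directly from the definition of $\NP$-hardness under a robust reduction.

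First I would verify the YES-case. Suppose $\val(\Phi)\geq c$. Since $\val(\Psi)=\val(\Phi)\geq c>c-\eps$, the non-percolated instance $\Psi$ already lies in the YES-set of $\gapCSPF(c-\eps,s+\eps)$, as required by the first clause of the definition of a robust reduction. For the percolated instance I split on whether $\val(\Phi)$ equals $1$: if $\val(\Phi)=1$, item~1 of Theorem~\ref{thm:CSP var perc} gives $\val(\Psi^v_p)=1\geq c-\eps$ with probability $1$; if instead $c\leq\val(\Phi)<1$, item~2 gives $\val(\Psi^v_p)>\val(\Phi)-\eps\geq c-\eps$ with probability $1-o(1)$. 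Either way $\Psi^v_p$ is a YES-instance of $\gapCSPF(c-\eps,s+\eps)$ with probability $1-o(1)$.

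Next I would verify the NO-case. Suppose $\val(\Phi)\leq s$. Then $\val(\Psi)=\val(\Phi)\leq s<s+\eps$, so $\Psi$ lies in the NO-set; and since $s<c\leq 1$ forces $\val(\Phi)<1$, item~2 of Theorem~\ref{thm:CSP var perc} gives $|\val(\Psi^v_p)-\val(\Phi)|<\eps$, hence $\val(\Psi^v_p)<s+\eps$, with probability $1-o(1)$, so $\Psi^v_p$ is a NO-instance with probability $1-o(1)$. Combining the two cases shows $r$ is $Perc$-robust, and since it reduces an $\NP$-hard problem to $\gapCSPF(c-\eps,s+\eps)$, the latter is $\NP$-hard under a robust reduction with respect to variable percolation with parameter $p$ for every $p>\frac{1}{n^{1-\delta}}$.

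Since the statement is a direct corollary, there is no real obstacle; the only point requiring a little attention is that Theorem~\ref{thm:CSP var perc} provides an exact guarantee only when $\val(\Phi)=1$ and otherwise only an additive-$\eps$ guarantee, so the YES-case must be split into those two subcases, with the $\eps$-slack absorbed into the shifted gap. No further concentration or union-bound argument is needed here beyond what is already contained in Theorem~\ref{thm:CSP var perc}, since all probabilities above refer to a single draw of $\Psi^v_p$.
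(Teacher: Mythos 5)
Your proposal is correct and is exactly the argument the paper has in mind: the paper states Corollary~\ref{cor:CSP perc vertex} as immediate from Theorem~\ref{thm:CSP var perc} (the analogue of how Corollary~\ref{cor:CSP perc clause} follows from Theorem~\ref{thm:CSP clause perc}), and your write-up simply spells out the composition with the definition of a $Perc$-robust reduction, including the appropriate YES-case split on whether $\val(\Phi)=1$. The side remarks about simple unweighted instances and about $\eps$ being small enough that $c-\eps>s+\eps$ are consistent with the paper's implicit assumptions.
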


\begin{proof}[Proof of Theorem~\ref{thm:CSP var perc}]
The reduction is the same reduction as in the proof of Theorem~\ref{thm:CSP clause perc}.
Namely, given a simple unweighted instance $\Phi$ with $N$ variables and $M$ clauses
the reduction replaces each variable $x_i$ of $\Phi$,
with a set of $R$ corresponding variables $X_i = \{x_{i,j} : j \in [R]\}$,
and replaces each clause of $\Phi$ with a \emph{cloud} of $R^k$ corresponding clauses,
by taking all possible combinations of the variables from the corresponding $X_i$'s.
That is, the output of the reduction $\Psi$ has $n = NR$ variables and $m = M \cdot R^k$ clauses.
We choose $R =  N^{1/c}$, where $c = \frac{\log(pn)}{\log(n)} \in (\delta, 1)$
so that $\sqrt{\frac{\log(N)}{pR}} < \frac{1}{N^{c/2}}$.

For each $i \in [N]$ let $X'_i$ be variables from $X_i$ that remain in $\Psi^v_p$
after variable percolation. By concentration, it follows that
for $p > \frac{1}{N^{1-\delta}}$ with high probability
$||X'_i| - pR|< O(\sqrt{pR \log(n)})$ for all $i \in [N]$.
We assume from now on that this is indeed the case.
For a constraint $C_i$ of $\Phi$ let $x_{i_1},\dots,x_{i_k}$ be the
variables that participate in $C_i$.
Then, the number of clauses in the cloud corresponding to $C_i$ in $\Psi^v_p$
is equal to $|X'_{i_1}| \cdots |X'_{i_k}|$, and the total number of clauses in $\Psi^v_p$
is $\sum_{i=1}^M |X'_{i_1}| \cdots |X'_{i_k}|$.

By Lemma~\ref{lemma:dense CSP} we have $\val(\Psi) = \val(\Phi)$.
In particular, if $\Phi$ is satisfiable, then so if $\Psi$, as any assignment that
satisfies $\Psi$ also satisfies any subformula of $\Psi$, which implies that
$\Psi^v_p$ is also satisfiable with probability 1.

Suppose now that $\val(\Phi) < 1$.
We claim that with high probability $|\val(\Psi^v_p) - \val(\Phi)| < \eps$.

To prove that $\val(\Psi^v_p) \geq \val(\Phi) - \eps$,
let $\sigma$ be an optimal assignment to $\Phi$.
Extend $\sigma$ to an assignment $\tau$ to $\Psi^v_p$
by letting $\tau(x_{i,j}) = \sigma(x_i)$ for all $1\le i \le R$.
Note that for each constraint $C_i$ of $\Phi$ if $C_i$ is satisfied by $\sigma$,
then in $\Psi^v_p$ all clauses in the corresponding cloud are satisfied,
and otherwise no clause in the corresponding cloud is satisfied.
Denoting by $SAT_\tau(C_i)$ the number of clauses that are satisfied by $\tau$
in the cloud corresponding to $C_i$ we have
\[
    \val_\tau(\Psi^v_p)
	=\frac{\sum_{i=1}^M {SAT_\tau(C_i)}}{\sum_{i=1}^M |X'_{i_1}| \cdots |X'_{i_k}|}
	\geq\frac{\val(\Phi) M \cdot (p R - \sqrt{pR \log(N)})^k}{M(p R + \sqrt{pR \log(N)})^k}
    \geq \val(\Phi) - O(\sqrt{\frac{\log(N)}{pR}}).
\]
By the choice of $R$ we get for large enough $N$
\[
	\val_\tau(\Psi^v_p) \geq \val(\Phi) - O(\frac{1}{N^{c/2}})
    \geq \val(\Phi) - \eps.
\]

Next, we prove that $\val(\Phi) \geq \val(\Psi^v_p) - \eps$.
Given an assignment $\tau$
to the variables of $\Psi^v_p$ we decode it into an assignment to $\Phi$ using
the same decoding as in the proof of Lemma~\ref{lemma:dense CSP}.
Namely, we choose a random assignment $\sigma$ to the variables of $\Phi$ by setting
$\sigma(x_i)=1$ with probability $p_i^1$ and $\sigma(x_i)=0$ with probability $p_i^0$ independently between $i$'s,
where $p_i^1 =\frac{|\{x_{i,j} \in X'_i : \tau(x_{i,j}=1)\}|}{|X'_i|}$,
and $p_i^0 = 1 - p_i^1$.
Let $C'_i$ be the set of clauses in $C_i$ that belong to $\Psi_{p}^v$. Let $SAT_\tau(C'_i)$ the number of clauses that are satisfied by $\tau$
in $C'_i$,
it follows that the expected value of $\Phi$ under the assignment $\sigma$ is
\begin{equation}\label{eq:val(phi)}
	\E[\val_\sigma(\Phi)]
        = \frac{1}{M} \sum_{i=1}^M \Pr[\textrm{$\sigma$ satisfies $C'_i$}] \\
        = \frac{1}{M} \sum_{i=1}^M \frac{SAT_\tau(C'_i)}{|X'_{i_1}| \cdots |X'_{i_k}|}.
\end{equation}
On the other hand we have
\begin{equation}\label{eq:val(psi perc)}
    \val_\tau(\Psi^v_p)
	=\frac{\sum_{i=1}^M {SAT_\tau(C'_i)}}{\sum_{i=1}^M |X'_{i_1}| \cdots |X'_{i_k}|}.
\end{equation}
Now, using the assumption that for all $i \in[n]$
it holds that $||X'_i| - pR|< \sqrt{pR \log(n)}$,
we get that both \eqref{eq:val(phi)} and \eqref{eq:val(psi perc)} are between
$\frac{\sum_{i=1}^M SAT_\tau(C'_i)}{M(p R + \sqrt{pR \log(N)})^k}$
and
$\frac{\sum_{i=1}^M SAT_\tau(C'_i)}{M(p R - \sqrt{pR \log(N)})^k}$.
A simple computation reveals that the difference between the two quantities is
at most $O(\sqrt{\frac{\log(N}{pR}})$, and hence
\[
	\E[\val_\sigma(\Phi)] \geq \val_\tau(\Psi^v_p) - O(\sqrt{\frac{\log(N)}{pR}})
	\geq \val_\tau(\Psi^v_p) - O(\frac{1}{N^{c/2}})
	\geq \val_\tau(\Psi^v_p) - \eps.
\]
This completes the proof of Theorem~\ref{thm:CSP var perc}.
\end{proof}

\section{The Subset Sum Problem and Percolation}

In this section we consider the subset-sum problem, and its percolated version.
In the subset-sum problem we are given a set items $\{a_i\}_{i=1}^n$ which are positive integers, and a target integer $S$.
The goal is to decide whether there is a subset of $a_i$'s whose sum is $S$.

Given an instance $I = (\{a_i\}_{i=1}^n;S)$ of the subset sum problem,
we define a percolation on $I$ with probability $p$ to be a random instance $I_p$,
where each item $a_i$ is included in $I_p$ with probability $p$ independently,
with the target of $I_p$ being the same as the target of $I$.

It is known that subset sum is $\NP$-hard.
Below we prove hardness of the percolated version of the subset sum problem.

\begin{theorem}
Rhe $\SS$ problem is $\NP$-hard under robust reduction with respect to
percolation with parameter $p$ for any $p > \frac{1}{n^{1/2-\eps}}$,
where $n$ is the number of items in a given instance, and $\eps > 0$ is any fixed constant.
\end{theorem}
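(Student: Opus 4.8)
The plan is to exhibit an $\RP$-robust reduction from $\SS$ to itself, analogous to the blow-ups used for the graph and $\CSP$ problems. Given an instance $I=(a_1,\dots,a_N;S)$, the reduction outputs an instance $\Psi$ with $n=\Theta(NR)$ items, where the blow-up parameter $R$ is a polynomial in $N$ fixed at the end. For each $i\in[N]$ we introduce a \emph{cloud} of $R$ ``present'' items carrying the value $a_i$, and a paired cloud of $R$ ``absent'' items carrying the value $0$ in that coordinate, and we shift the target by a large constant. The crucial device is a block of high-order digits appended to every number, written in a base exceeding $2R$ so that carries can never occur, which records for each $i$ how many items from the $i$-th present/absent pair were selected; the target is chosen so that this count is forced to equal exactly $1$ for every $i$. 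Consequently any solution of $\Psi$ must take exactly one representative from each pair, and the set of indices whose ``present'' representative was taken is precisely a subset of $[N]$ summing to $S$. This makes $\Psi$ solvable if and only if $I$ is solvable, and --- what matters for robustness --- it makes the $R$ items inside a cloud \emph{interchangeable}: the number hit by a choice of representatives does not depend on which representatives are picked. (If the formulation insists on distinct items, one perturbs the copies in the lowest digits and adjoins a short family of filler items $1,2,\dots,Q$ whose subset sums cover an interval long enough to absorb the total perturbation, with $Q$ a lower-order term compared to $NR$.)

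First I would check that this is a legitimate polynomial-time reduction with $\Psi\in\SS \iff I\in\SS$: each produced number has $O(N\log R+\log S)$ bits; in the forward direction a witness $T^\ast\seq[N]$ of $I$ extends to the solution of $\Psi$ that takes one representative from each cloud; in the converse direction one reads $T^\ast$ off the forced counter digits of an arbitrary solution of $\Psi$. Thanks to the no-carry choice of base, both directions are digit-by-digit bookkeeping.

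The robustness analysis then splits as usual. For the NO-direction there is nothing to prove: $\Psi_p$ retains a subset of the items of $\Psi$ with the same target, so any solution of $\Psi_p$ is a solution of $\Psi$; hence $I\notin\SS$ implies $\Psi_p\notin\SS$ with probability $1$, which is exactly why the reduction is automatically an $\RP$-robust reduction. For the YES-direction, if $I\in\SS$ with witness $T^\ast$ then a witness for $\Psi_p$ survives provided that, for every $i$, at least one representative of the appropriate type (``present'' if $i\in T^\ast$, ``absent'' otherwise) survives percolation --- and, when filler items are used, that enough filler items survive to realize the perturbation-absorbing value. A union bound over the $N$ clouds bounds the failure probability of the first event by $N(1-p)^R\le Ne^{-pR}$; the filler event is controlled by a Chernoff bound together with the fact that the subset sums of a dense random subset of $\{1,\dots,Q\}$ cover a long interval with high probability. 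Both probabilities are $o(1)$ as long as $R\ge C\log N/p$ for a large enough absolute constant $C$.

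It remains to fix the parameters. Since $\Psi$ has $n=\Theta(NR)$ items, choosing $R$ to be a suitable power of $N$ keeps the reduction polynomial and at the same time guarantees $R\ge C\log N/p$ for every $p>\frac{1}{n^{1/2-\eps}}$; Proposition~\ref{prop:BPP RP harndess} then yields the claimed hardness, namely that unless $\NP\seq\RP$ no polynomial-time algorithm decides $\SS$ on $p$-percolated instances with high probability. I expect the real difficulty to be the gadget design from the first paragraph: the counter block must simultaneously \emph{forbid} the spurious multi-copy solutions that a naive blow-up would create (needed for soundness) and \emph{keep the copies of each cloud interchangeable} (needed so that random deletions cannot destroy the witness), using only polynomially many extra bits per number; it is also this tension --- concretely, the need for each cloud to retain a surviving representative while $n$ stays polynomial --- that confines the argument to $p>\frac{1}{n^{1/2-\eps}}$ rather than to the smaller values of $p$ handled for the graph problems.
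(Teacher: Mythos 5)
Your reduction also forces exactly one representative per index (via high-order counter digits, playing the same role as the paper's exponentially growing $M_i$'s), but the way you make the cloud items interchangeable differs from the paper's, and the difference is where the real work is. The paper attaches symmetric offsets $k \in \{-R,\dots,R\}$ to both $J_i$ and $J'_i$, so that every item is distinct, and then resolves the resulting freedom by \emph{cancellation}: Claim~\ref{claim:zerosum} pairs index $i$ with $i+1$ and looks for a surviving $k$ in cloud $i$ whose negative survives in cloud $i+1$, forcing $\sum_i k_i = 0$. The survival of such a matched pair involves two independent $p$-events, so the claim needs $R\,p^2 \gtrsim \log N$, and this is exactly where $p > 1/n^{1/2-\eps}$ comes from. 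Your stated diagnosis of the exponent $1/2$ --- ``the need for each cloud to retain a surviving representative while $n$ stays polynomial'' --- is not correct: that single-survival event only needs $R\,p \gtrsim \log N$, which would allow $p > 1/n^{1-\eps}$. You have misidentified the bottleneck; it is not a surviving representative per cloud but a surviving \emph{matched pair} across clouds.

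The second and more serious issue is that your filler gadget is exactly the part that is hard, and it is left as a sketch that does not obviously go through. If the copies inside a cloud are perturbed in the low digits, the total perturbation $\sum_i \delta_i$ is a number that depends on which representatives survived, and you must hit it \emph{exactly} with a subset of the \emph{surviving} fillers. Covering an interval exactly is much stronger than approximately, and a random $p$-subset of $\{1,\dots,Q\}$ does not have this property for free: the obvious witnessing chain $1,2,4,\dots,2^{\lfloor\log Q\rfloor}$ survives entirely only with probability $p^{\Theta(\log Q)}$, which vanishes for the relevant $p$, so one needs a genuinely new argument there. The paper's symmetric-offset-plus-cancellation construction avoids the filler question entirely, at the cost of the $p^2$ factor. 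In short, your architecture is a plausible alternative, but the central probabilistic gadget (the filler covering lemma) is unproved, and the reasoning you give for the claimed threshold is wrong for your own construction.
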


\begin{proof}
	In order to prove the theorem, we show a reduction that given an instance
    $I = (\{a_i\}_{i=1}^N;S)$ of the subset-sum problem with all $a_i > 0$,
    produces an instance $I'$ on $n$ variables such that the following two properties are satisfied.
    \begin{itemize}
    \item If $I \in \SS$, then $I' \in \SS$, and furthermore, with high probability $I'_p \in \SS$.
    \item If $I \notin \SS$, then $I' \notin \SS$, and hence $I'_p \notin \SS$ with probability 1.
    \end{itemize}
    Let us assume that the number of items in $I$ is even.
    (If $N$ is odd, then, add an item to $I$ that is equal to zero).
	Let $R$ be a parameter to be chosen later,
	let $N' = \lceil{\log_2(\sum_i a_i)\rceil}$,
	and for $i=1,\dots,n$ let $M_i = 2^{C'(N' + i)}$ for a large enough constant $C'$.
	For each $i \in [N]$ define the following set
	\[
		J_i = \{ M_i + a_i \cdot N^3 + k : k \in \{-R,\dots,R\}  \}
		\quad
		\textrm{and}
		\quad
		J'_i = \{ M_i + k : k \in \{-R,\dots,R\}  \}
	\]
	Consider now the instance
	\[
		I' = ( \cup_{i \in [N]} (J_i \cup J'_i) ; S'),
	\]
	where $S' = S \cdot N^3 + \sum_{i=1}^N M_i$.
	Clearly this is a polynomial reduction
	that outputs a $\SS$ instance with $n = 2NR$ items.
	
	We show first that $I \in \SS$ if and only if $I'\in \SS$.
	Indeed, suppose that for some subset $T \subseteq [N]$ it holds that $\sum_{i \in T} a_i = S$.
	Consider the following subset of items of $I'$.
	For each $i \in T$ take the item from $J_i$ that corresponds to $k=0$,
	and for $i \in [N] \setminus T$ take the item from $J'_i$ that corresponds to $k=0$.
	Then, by taking these items we are getting
	\[
		\sum_{i \in T} (a_i \cdot N^3 + M_i) + \sum_{i \in [N] \setminus T} M_i = S'.
	\]

	In the other direction, suppose that $I' \in \N$.
	Then, there is some subset $T' \subseteq [N] \times \{0,1\} \times \{-R,\dots,R\}$ such that
	\[
	\sum_{(i,t,k) \in T'} (M_i + a_i \cdot N^3 \cdot t + k)  =  S' = S \cdot N^3 + \sum_{i \in [N]} M_i.
	\]
	Note that by the choices of $M_i$ (namely because $M_i$'s are much larger than $a_i$'s and $R$)
	for each $i \in [N]$ there is a unique $t_i \in \{0,1\}$
	and a unique $k_i \in \{-R,\dots,R\}$ such that $(i,t_i,k_i) \in T'$.
	Therefore, since $\sum_{i=1}^N |k_i| \leq N D < N^3$,
	it follows that $\sum_{i \in [N]} k_i = 0$, and hence
	by defining $T = \{i \in [N] : t_i = 1\}$ we get that
	$\sum_{i \in T} a_i = S$, and so $I \in \SS$.
	
	\medskip
	
	Next, we claim that the reduction above is in fact robust.
	Indeed, consider the percolated instance $I'_p$ for some $p \in (0,1]$.
	Note that if $I \notin \SS$, then neither is $I'$,
    and hence $I'_p \notin \SS$ with probability 1.
    It remains to show that if $I \in \SS$, then
	with high probability $I'_p \in \SS$.
	The proof relies on the following claim.
	\begin{claim}\label{claim:zerosum}
        Let $N \in \mathbb{N}$ be even, and let $R \in \mathbb{N}$.
		Let $A_1,\dots,A_n \subseteq \{-R,\dots,R\}$ be random sets chosen by letting
		each $k \in \{-R,\dots,R\}$ to be in $A_i$ with probability $p$ independently of each other.
		Then, with probability $\geq 1-N/2 \cdot (1-p^2)^{2R}$ for each $i \in [n]$
        there is $k_i \in A_i$ such that $\sum_{i=1}^N k_i = 0$.
	\end{claim}
\begin{proof}
	Note that for each odd $i \in [N]$,
	the probability for a fixed element $x \in \{-R,\dots,R\}$
	that both $x \in A_i$ and $-x \in A_{i+1}$ hold is $p^2$. Therefore,
	\[
	\Pr[\exists k \in \{-R,\dots,R\} : k \in A_i \mbox{ and } -k \in A_{i+1} ]
	 = 1 -(1-p^2)^{2R+1}.
	\]
	Hence, by taking the union bound over all pairs $(i,i+1)$
	with odd values of $i$ we get that with probability at least  $1-N/2 \cdot (1-p^2)^{2R}$,
	for all odd $i$'s there is $k_i \in A_i$ such that $-k_i \in A_{i+1}$.
\end{proof}

	Suppose now that $I \in \SS$, i.e.,
	for some subset $T \subseteq [N]$ it holds that $\sum_{i \in T} a_i = S$.
	Note that the percolated instance $I'_p$ is obtained from $I'$
	by taking random subsets of $J_i$ and $J'_i$ independently of each other.
	For $i \in [N]$ define $A_i$ to be the $p$-percolated subsets of $J_i$ if $i \in T$,
	and define $A_i$ to be the $p$-percolated subsets of $J'_i$ if $i \notin T$.
	Note that if $R > \frac{C \log(N)}{p^2}$, then the conclusion of
	Claim~\ref{claim:zerosum} holds with probability at least $1 - 1/N$.
	Therefore, in the percolated instance $I'_p$ by taking the items of $I'_p$
	from $A_i$'s that correspond to $k \in A_i$'s from Claim~\ref{claim:zerosum}
	we get
	\begin{eqnarray*}
	\sum_{i \in T} (M_i + a_i \cdot N^3 + k_i) + \sum_{i \in [N] \setminus T} (M_i + k_i)
	& = & (\sum_{i \in [N]} M_i) + (\sum_{i \in T} a_i \cdot N^3) + (\sum_{i \in [N]} k_i) \\
	& = & (\sum_{i \in [N]} M_i) + S + 0 \\
	& = & S'.
	\end{eqnarray*}
	Therefore, with high probability $I'_p \in$ subset-sum as required.

	Finally, note that the reduction works as long as $R > C \frac{\log(N)}{p^2}$,
	or equivalently $p > C\sqrt{\frac{\log N}{R}}$.
	It is easy to verify that for $R = N^{1/c}$, with $c = \frac{\log(p \sqrt{n})}{\log(n)}$,
	the foregoing reduction is indeed a robust reduction
	with respect to percolation with parameter $p > \frac{1}{n^{1/2-\eps}}$
	for any constant $\eps > 0$, where $n$ is the number of items in the $\SS$ instance.
\end{proof}

\section{Conclusion}

We have examined the complexity of percolated instances of several well known
$\NP$-hard problems and established the hardness of solving exactly and approximately
these problems on such instances. It might be of interest to study percolated instances
of other $\NP$-hard problems that were not considered here.

There are several question arising from this work.
For the $\HamCycle$ problem it would be interesting to
determine whether vertex-percolated instances are hard (in directed or undirected graphs).
Currently, we are unable to establish that this problem is hard even if every vertex remains with probability $p < 1$, where $p$ is a constant that does not depend on the size of the graph.
It could be the case that there is no reduction from $\NP$ to $\HamCycle$ that is
robust to vertex percolation. Proving the inexistence of such reductions (if true)
could be of interest.

\medskip

It might also prove worthwhile to determine whether percolated instances
of $\kSAT[3]$ remain hard to solve even if $p=O(1/n^2)$ over $n$-variable formulas.
Several works suggest that finding a satisfying assignment for a random
$\kSAT[3]$ instance with $n$ variables and $C_1 n$ clauses is hard when
$C_1$ is close to the satisfiability threshold \cite{bart}. Other works
suggest that certifying the unsatisfiability of a random $\kSAT[3]$ instance
with $C_2n$ clauses, with $C_2$ being large enough, is difficult as well~\cite{Feige02}.
These works may serve as evidence that $\kSAT[3]$ should be hard to solve for $p=O(1/n^2)$.

\medskip

One of the first algorithms for solving independent sets in the
random graph $G(n,p)$ is the Karp-Sipser algorithm~\cite{KS}.
This algorithm works by choosing iteratively a degree one vertex randomly,
adding it to the independent set, and removing both the selected vertex
and its sole neighbor from the graph. When there are no vertices of degree one,
the algorithm terminates. It was proven in \cite{Pittel} that when $p<\frac{e}{n}$
(where $e$ is the base of the natural logarithm, $e=2.71828 \ldots$) the algorithm
finds with high probability an optimal independent set. When $p>\frac{e}{n}$,
the Karp-Sipser algorithm fails (with high probability) to find a maximum
independent set of the graph \cite{KS}. Despite extensive research, no algorithm is known
to find an optimal independent set in $G(n,p)$ when $p>\frac{e}{n}$.

It is not clear whether the Karp-Sipser algorithm works
on random subgraphs of worst-case graphs, as opposed to
a random subgraph of the complete graph. This leads to the following problem.

\begin{problem}
	Let $p<\frac{e}{n}$. Is there a polynomial-time algorithm that given an $n$-vertex graph $G$
	finds a maximal independent set in $G_{p,e}$ with high probability?

	Is it true that for $p>\frac{e}{n}$
	no algorithm can find a maximal independent set in $G_{p,e}$ for worst case instance $G$, unless $\NP \subseteq \BPP$?
\end{problem}

\section*{Acknowledgements}
We thank Itai Benjamini, Huck Bennett, Uri Feige and Sam Hopkins for useful discussions.

\bibliography{po}
\bibliographystyle{plain}

\end{document}